\documentclass[showkeys,aps,nofootinbib,onecolumn,floatfix]{elsarticle}

\usepackage{amsmath,amsthm,amsfonts,amssymb,times,bbm,graphicx}

\usepackage{hyperref}

\newtheorem{theorem}{Theorem}
\newtheorem{proposition}[theorem]{Proposition}
\newtheorem{lemma}[theorem]{Lemma}

\newtheorem{definition}[theorem]{Definition}


\def\RR{\mathbbm{R}}
\def\CC{\mathbbm{C}}

\def\NN{\mathbbm{N}}
\def\EE{\mathbbm{E}}


\def\Id{\mathbbm{1}}


\DeclareMathOperator{\tr}{tr}

\hyphenation{modu-lus}

\begin{document}

\begin{frontmatter}

\title{RIPless compressed sensing from anisotropic measurements}

\author[eth,freiburg]{R.\ Kueng\fnref{t1}}
\author[freiburg]{D.\ Gross\fnref{t1}}

\address[eth]{Institute for Theoretical Physics, ETH Z\"urich, Wolfgang-Pauli-Strasse 27, 8093 Z\"urich, Switzerland}

\address[freiburg]{Institute for Physics, University of Freiburg, Rheinstrasse 10, 79104 Freiburg, Germany}

\fntext[t1]{Contact: \href{http://www.qc.uni-freiburg.de}{www.qc.uni-freiburg.de}}


\begin{abstract} 
Compressed sensing is the art of reconstructing a sparse vector from
its inner products with respect to a small set of randomly chosen
measurement vectors. It is usually assumed that the ensemble of
measurement vectors is in \emph{isotropic position} in the sense
that the associated covariance matrix is proportional to the
identity matrix. In this paper, we establish bounds on the number of
required measurements in the \emph{anisotropic} case, where the
ensemble of measurement vectors possesses a non-trivial covariance
matrix. Essentially, we find that the required sampling rate grows
proportionally to the condition number of the covariance matrix. In
contrast to other recent contributions to this problem, our
arguments do not rely on any \emph{restricted isometry properties}
(RIP's), but rather on ideas from convex geometry which have been
systematically studied in the theory of low-rank matrix recovery.
This allows for a simple argument and slightly improved bounds, but
may lead to a worse dependency on noise (which we do not consider in
the present paper).
\end{abstract}

\begin{keyword}
Compressed sensing, $\ell_1$ minimization, the LASSO, the Dantzig
selector, restricted isometries, anisotropic ensembles, sparse regression, operator Bernstein
inequalities, non-commutative large deviation estimates, the golfing
scheme.
Subject Classification: 
(94A12, 60D05, 90C25).
\end{keyword}

\end{frontmatter}

\section{Introduction and Results}

Compressed sensing is a highly active research field in statistics
and signal analysis \cite{candes_robust_2006,
candes_near-optimal_2006, donoho_compressed_2006,
wainwright_sharp_2009}. It can be thought
of as being concerned with establishing \emph{Nyquist}-type sampling
theorems for signals which are sparse, rather than band-limited. 

More precisely, let $x\in\CC^n$ be a vector with no more than $s$
non-zero entries (i.e.\ $x$ is \emph{$s$-sparse}). Suppose we have no
information about $x$ apart from its sparsity and the inner products
$\langle a_i, x \rangle, i=1,\dots , m$ between $x$ and $m\ll n$ vectors
$a_i$. The central question is: under what conditions on $m$ and the
$a_i$'s is it possible to uniquely and computationally efficiently
recover $x$? Early celebrated results \cite{candes_robust_2006, candes_near-optimal_2006, donoho_compressed_2006}
established e.g.\ that if the measurement vectors $\{a_i\}_{i=1}^m$ are
randomly chosen discrete Fourier vectors and $m=O(s \log n)$, then,
with high probability, the unknown
vector $x$ is the unique minimizer of the $\ell_1$-norm in the affine
space defined by the known inner products.

The precise statement of our results in this introductory section will
follow very closely the exhibition in
\cite{candes_probabilistic_2011}. The reason for this approach, and
the relation of the present paper with other work (in particular
\cite{rudelson_reconstruction_2011}), is stated in
Section~\ref{sec:relations}.

We make the following definitions:
Let $F$ be 
a distribution
of random vectors on $\CC^n$. Let $a_1, \dots,
a_m$ be a sequence of i.i.d.\ random vectors drawn from $F$.
Define the \emph{sampling matrix}
\begin{equation*}
	A:=\frac 1{\sqrt m} \sum_{i=1}^m e_i a_i^*,
\end{equation*}
where $e_1, \ldots, e_m$ denote the canonical basis vectors of $\mathbb{C}^m$.
Once more, let $x$ be an $s$-sparse vector. We aim to prove that with high
probability the solution $x^\star$ to the convex optimization problem
\begin{equation}\label{eqn:l1}
	\min_{\bar{x}\in\mathbb{C}^{n}}
	\left\| \bar{x}\right\|_1\quad\mbox{subject to}\quad A\bar{x}=Ax,
\end{equation}
is unique and equal to $x$ given that the number of measurements 
$m$ is large enough. 

It turns out that the required size of $m$ depends only on two simple
properties of the ensemble $F$. These are identified below:

\begin{description}
	\item[Completeness] 
	We require that the ensemble $F$ is \emph{complete} in the sense
	that the \emph{covariance matrix} $\Sigma=\EE[a a^*]^{1/2}$ is invertible. 
	The \emph{condition number}
	\footnote{
		Recall that the condition number of a matrix is the ratio
		between its largest and its smallest singular value.
	}
	of $\Sigma$ will be denoted by $\kappa$.
\end{description}

Most of the previous work has focused on the case where the covariance
matrix is proportional to the identity matrix $\Sigma\propto \Id$
(however, see Section~\ref{sec:relations}). We refer to this case as the
\emph{isotropic} one. 

In order to describe the second relevant property of the ensemble, we
have to fix a scale. Indeed, note that the minimizer of the convex
problem (\ref{eqn:l1}) is invariant under re-scaling of the ensemble
(i.e. substituting $a_i$ by $\nu a_i$ for a number $\nu \neq 0$). The
same is true for the condition number $\kappa$. Thus, we are free to
pick an advantageous scale, without affecting the notions introduced
so far. In the isotropic case, a natural normalization convention
\cite{candes_probabilistic_2011} consists in requiring that $\EE[a a^*] =
\Id$. This option is not available in the more general, anisotropic
case, we are interested in here.  Instead, we will implicitly demand
from now that
\begin{equation}\label{eqn:scaling}
	\lambda_\mathrm{max}(\EE[a a^*]) = \lambda_\mathrm{min}(\EE[a
	a^*])^{-1},
\end{equation}
where $\lambda_{\mathrm{max}}, \lambda_{\mathrm{min}}$ denote the
maximal and the minimal eigenvalue respectively. In the isotropic
case, this reduces to the normalization $\EE[a a^*]=\Id$ used in
\cite{candes_probabilistic_2011}.

The fact that  (\ref{eqn:scaling}) can always be achieved (and further
properties that follow from it) will be established in
Lemma \ref{lem:e0} below.
With this convention, we define:

\begin{description}
	\item[Incoherence]	
	The \emph{incoherence parameter} is the smallest number $\mu$ such
	that
	\begin{equation}
		\max_{1\leq i\leq n}
			\left|\left\langle a,e_{i}\right\rangle \right|^{2}
			\leq \mu, \qquad 
		\max_{1\leq i\leq n}
		\left|\left\langle a,\EE[a a^*]^{-1} \, e_{i}\right\rangle \right|^{2} 
		\leq \mu
		\label{eqn:incoherence2}
	\end{equation}
	holds almost surely.
\end{description}

The previously known isotropic result we aim to generalize is:

\begin{theorem}[\cite{candes_probabilistic_2011}]
	Let $x$ be an $s$-sparse vector in
	$\mathbb{R}^n$.
	If we demand isotropy ($\EE[aa^*]=\Id$) and if the number of
	measurements fulfills
	\begin{equation*}
		m \geq
		C_{\omega}\mu s \log n,
	\end{equation*}
	then the solution $x^\star$ of the convex program (\ref{eqn:l1}) is
	unique and equal to $x$ with probability at least 
	$1-\frac{5}{n}-\mathrm{e}^{-\omega}$.

	In the statement above, $C_{\omega}$ may be chosen as $C_0
	\left(1+\omega\right)$ for some positive numerical constant $C_0$
\end{theorem}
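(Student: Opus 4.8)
The plan is to use the dual-certificate method for $\ell_1$ recovery, building the certificate by the golfing scheme and supplying all the concentration through operator and vector Bernstein inequalities, specialised to the isotropic normalisation $\EE[aa^*]=\Id$. Write $T=\supp(x)$, so $|T|\le s$, and let $P_T$ denote the coordinate projection onto $T$. The first step is to reduce exact recovery to deterministic conditions on $A$: it suffices that (i) $A$ be well conditioned on $T$-supported vectors, say $\|P_TA^*AP_T-P_T\|\le\tfrac12$, which in particular makes $A$ injective there; (ii) the columns be bounded, $\max_{1\le j\le n}\|Ae_j\|_2\le\sqrt\mu$, which is automatic from the first inequality in \eqref{eqn:incoherence2}; and (iii) there exist an \emph{inexact dual certificate} $y=A^*w$ in the row space of $A$ with $\|P_Ty-\operatorname{sgn}(x)\|_2\le\tfrac14$ and $\|P_{T^c}y\|_\infty\le\tfrac14$. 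A short convex-analysis argument --- subgradient optimality for \eqref{eqn:l1} combined with (i)--(ii) --- then shows $x^\star=x$ and that the minimiser is unique; this part is deterministic and costs no measurements.

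Condition (i) is handled by matrix Bernstein. Indeed $P_TA^*AP_T=\frac1m\sum_{i=1}^m(P_Ta_i)(P_Ta_i)^*$ is an average of i.i.d.\ positive semidefinite matrices with mean $P_T\EE[aa^*]P_T=P_T$ by isotropy, while $\|P_Ta_i\|_2^2=\sum_{t\in T}|\langle a_i,e_t\rangle|^2\le\mu s$ by \eqref{eqn:incoherence2}. The operator Bernstein inequality then gives $\|P_TA^*AP_T-P_T\|\le\tfrac12$ with probability at least $1-n^{-1}$ once $m\gtrsim\mu s\log n$ --- already the claimed rate. A blockwise version of this same estimate drives the contraction in the next step.

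For (iii) I would run the golfing scheme. Partition the $m$ rows into $L=O(\log n)$ disjoint blocks and let $B_\ell$ be the correspondingly normalised partial sampling operator, so $\EE[B_\ell^*B_\ell]=\Id$. Set $q_0=\operatorname{sgn}(x)$ (supported on $T$), $y_0=0$, and iterate
\[
  y_\ell=y_{\ell-1}+B_\ell^*B_\ell\,q_{\ell-1},\qquad q_\ell=\operatorname{sgn}(x)-P_Ty_\ell=(P_T-P_TB_\ell^*B_\ell P_T)\,q_{\ell-1},
\]
so that $y_L=\sum_{\ell=1}^L B_\ell^*B_\ell q_{\ell-1}$ lies in the row space of $A$ by construction. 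Blockwise Bernstein yields $\|q_\ell\|_2\le\tfrac12\|q_{\ell-1}\|_2$, hence $\|P_Ty_L-\operatorname{sgn}(x)\|_2=\|q_L\|_2\le2^{-L}\sqrt s\le\tfrac14$ as soon as $L\gtrsim\log s$. For the off-support bound, each increment has vanishing mean on $T^c$ --- because $q_{\ell-1}$ is $T$-supported and $\EE[B_\ell^*B_\ell]=\Id$ --- and $\|P_{T^c}B_\ell^*B_\ell q_{\ell-1}\|_\infty$ is estimated coordinatewise by the scalar (or vector) Bernstein inequality, each sample contributing at most $|\langle a_i,e_j\rangle|\,|\langle P_Ta_i,q_{\ell-1}\rangle|\le\sqrt\mu\cdot\sqrt{\mu s}\,\|q_{\ell-1}\|_2$, where \emph{both} incoherence bounds in \eqref{eqn:incoherence2} enter. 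A union bound over the off-support coordinates together with the geometric decay of $\|q_{\ell-1}\|_2$ gives $\|P_{T^c}y_L\|_\infty\le\tfrac14$, provided the block sizes are chosen so that the $\log n$ from this union bound is not paid afresh in every iteration; balancing this keeps the total sample count at $m\gtrsim\mu s\log n$. Finally one collects the $O(\log n)$ block events and the event of the previous step in a single union bound and tunes the Bernstein deviation parameter to produce failure probability $1-\tfrac5n-\mathrm{e}^{-\omega}$ with $C_\omega=C_0(1+\omega)$, the linear dependence on $\omega$ being exactly the cost of pushing the Bernstein tail down to $\mathrm{e}^{-\omega}$.

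The main obstacle is the off-support control inside the golfing step: one must keep $\|P_{T^c}y_\ell\|_\infty$ small \emph{simultaneously} with the geometric contraction of $\|q_\ell\|_2$, which forces each per-block $\ell_\infty$ increment to be only a small multiple of $\|q_{\ell-1}\|_2$ with high probability. This is the most delicate concentration estimate, the one in which the second bound of \eqref{eqn:incoherence2} is indispensable, and arranging the number of iterations and the block sizes so that no superfluous logarithmic factor enters $m$ requires some care. Everything else reduces to routine bookkeeping with matrix and vector Bernstein inequalities.
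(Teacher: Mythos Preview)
This theorem is quoted from \cite{candes_probabilistic_2011} and is not proved directly in the present paper; the paper instead proves the anisotropic generalisation (Theorems~\ref{thm:main} and~\ref{kappa_theorem}), whose argument specialises to the isotropic case $\EE[aa^*]=\Id$, $X=\Id$, $\kappa_s=1$. Your outline matches that specialisation almost exactly: inexact duality (Lemma~\ref{lem:inexact}), a golfing construction driven by the block contraction $\|q_\ell\|_2\le c_\ell\|q_{\ell-1}\|_2$, coordinatewise Bernstein for the off-support increments, and matrix Bernstein for the local isometry.

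One point does need correction. Your condition~(ii), $\max_j\|Ae_j\|_2\le\sqrt\mu$, is not the condition the paper (or \cite{candes_probabilistic_2011}) uses, and by itself it does not close the convex-geometry step without picking up a stray factor of $\sqrt\mu$: from $Ah=0$ and~(i) one obtains $\|h_T\|_2\le\sqrt{2}\,\|Ah_T\|_2=\sqrt{2}\,\|Ah_{T^c}\|_2\le\sqrt{2\mu}\,\|h_{T^c}\|_1$, and since $\mu$ can be as large as $n$ this is too weak. The correct replacement is $\max_{i\in T^c}\|P_TA^*Ae_i\|_2\le1$ (equation~(\ref{eq:ass2}) with $X=\Id$), which is \emph{not} automatic and needs its own vector-Bernstein estimate---Lemma~\ref{lem:e4} in the paper. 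With that condition in place, $\|h_T\|_2\le2\|h_{T^c}\|_1$ follows as in the proof of Lemma~\ref{lem:inexact}, and the remainder of your sketch goes through. A minor aside: in the isotropic case the two inequalities in~(\ref{eqn:incoherence2}) coincide, so only a single incoherence bound is actually in play.
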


Our main theorem reads:

\begin{theorem}[Main Theorem]\label{thm:main}
	Let $x\in\CC^n$ be an $s$-sparse vector, let $\omega\geq1$. If the
	number of measurements fulfills
	\begin{equation*}
		m \geq C  \kappa \mu \omega^2 s \log n,
	\end{equation*}
	then the solution $x^\star$ of the convex program (\ref{eqn:l1}) is
	unique and equal to $x$ with probability at least 
	$1-\mathrm{e}^{-\omega}$.
\end{theorem}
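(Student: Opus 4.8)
\medskip
\emph{Proof plan.}
The plan is the standard two-step approach: (a) reduce exact recovery to a \emph{deterministic} condition --- a well-conditioned restriction of $A^{*}A$ to $\supp(x)$ together with an inexact dual certificate --- and (b) verify that condition with high probability using non-commutative Bernstein inequalities and a \emph{golfing scheme} adapted to the anisotropy, importing the convex-geometric tools of low-rank matrix recovery and of~\cite{candes_probabilistic_2011}. Write $T:=\supp(x)$, $|T|=s$, let $P_{T}$ be the coordinate projection onto $\CC^{T}$, let $\mathrm{sgn}\in\CC^{n}$ be the complex sign vector of $x$ (supported on $T$), put $\Acal:=\EE[aa^{*}]$ and $M:=A^{*}A=\tfrac1m\sum_{i}a_{i}a_{i}^{*}$ so that $\EE[M]=\Acal$, and abbreviate $M_{TT}:=P_{T}MP_{T}$ and $\Acal_{TT}:=P_{T}\Acal P_{T}$, viewed as operators on $\CC^{T}$. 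By Lemma~\ref{lem:e0} and the normalization~(\ref{eqn:scaling}) one has $\|\Acal\|=\|\Acal^{-1}\|=\kappa$, hence $\Acal_{TT}\succeq\kappa^{-1}P_{T}$ and $\|\Acal_{TT}^{-1}\|\le\kappa$.

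\emph{Step 1 (deterministic recovery condition).}
I would first show that $x$ is the unique minimizer of~(\ref{eqn:l1}) provided (i) $M_{TT}\succeq\tfrac1{2\kappa}P_{T}$ and (ii) there is a $y\in\mathrm{range}(A^{*})=\Span\{a_{1},\dots,a_{m}\}$ with $\|P_{T}y-\mathrm{sgn}\|_{2}\le\varepsilon$ and $\|P_{T^{c}}y\|_{\infty}\le\tfrac14$, where $\varepsilon\asymp(\kappa\mu)^{-1/2}$. This is the usual convex-analysis computation: a feasible competitor $x+h$ obeys $Ah=0$, hence $\langle y,h\rangle=0$; combining $\|x+h\|_{1}\le\|x\|_{1}$ with the subgradient inequality for $\|\cdot\|_{1}$ and with (ii) yields $\|h_{T^{c}}\|_{1}\le 2\varepsilon\|h_{T}\|_{2}$; the first incoherence bound $|\langle a_{i},e_{j}\rangle|\le\sqrt{\mu}$ gives the deterministic estimate $\|A_{T^{c}}h_{T^{c}}\|_{2}\le\sqrt{\mu}\,\|h_{T^{c}}\|_{1}$, while (i) gives $\|h_{T}\|_{2}^{2}\le 2\kappa\|A_{T}h_{T}\|_{2}^{2}=2\kappa\|A_{T^{c}}h_{T^{c}}\|_{2}^{2}$; chaining these three estimates forces $h=0$. (In the complex case one passes to real parts throughout.)

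\emph{Step 2 (the random estimates; why two incoherence conditions).}
Both (i) and the ingredients of the golfing scheme are controlled by operator Bernstein inequalities, and the decisive point for keeping the dependence on $\kappa$ \emph{linear} is never to ask for an additive spectral error below a fixed constant. For (i) I would estimate not $\|M_{TT}-\Acal_{TT}\|$ but the \emph{preconditioned} deviation $\|\Acal_{TT}^{-1/2}M_{TT}\Acal_{TT}^{-1/2}-P_{T}\|\le\tfrac12$ (which gives (i), since then $M_{TT}\succeq\tfrac12\Acal_{TT}\succeq\tfrac1{2\kappa}P_{T}$): this is a sum of i.i.d.\ rank-one terms $\tfrac1m\Acal_{TT}^{-1/2}(P_{T}a_{i})(P_{T}a_{i})^{*}\Acal_{TT}^{-1/2}$ with mean $P_{T}$, and the almost sure bound $\|\Acal_{TT}^{-1/2}P_{T}a_{i}\|_{2}^{2}\le\|\Acal_{TT}^{-1}\|\,\|P_{T}a_{i}\|_{2}^{2}\le\kappa s\mu$ (first half of~(\ref{eqn:incoherence2}) and $\|\Acal_{TT}^{-1}\|\le\kappa$) together with the matching variance bound yields a Bernstein exponent of order $m/(\kappa\mu s)$. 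The golfing increments, by contrast, require the almost sure bound $\|P_{T}\Acal^{-1}a_{i}\|_{2}^{2}=\sum_{j\in T}|\langle\Acal^{-1}e_{j},a_{i}\rangle|^{2}\le s\mu$, which is exactly the \emph{second} half of~(\ref{eqn:incoherence2}); this is why both conditions are imposed.

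\emph{Step 3 (anisotropic golfing) and the main obstacle.}
To build $y$, split the samples into $L=O(\log(\kappa\mu s))$ disjoint batches $B_{1},\dots,B_{L}$, put $M_{\ell}:=\tfrac1{|B_{\ell}|}\sum_{i\in B_{\ell}}a_{i}a_{i}^{*}$ (so $\EE[M_{\ell}]=\Acal$), set $y_{0}=0$ and iterate
\begin{equation*}
	y_{\ell}=y_{\ell-1}+M_{\ell}\Acal^{-1}q_{\ell-1},\qquad q_{\ell-1}:=\mathrm{sgn}-P_{T}y_{\ell-1},\qquad y:=y_{L}.
\end{equation*}
Every increment is a linear combination of the $a_{i}$, so $y\in\mathrm{range}(A^{*})$, as required in Step~1. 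The crux --- and the step I expect to be the main obstacle --- is this: a naive golfing scheme built from $M_{\ell}$ alone has residual recursion $q_{\ell}=(P_{T}-\Acal_{TT})q_{\ell-1}+(\text{mean-zero})$, and even with the best scalar step size the operator $P_{T}-\alpha\Acal_{TT}$ has norm $1-O(\kappa^{-2})$, so the scheme contracts only at rate $1-O(\kappa^{-2})$, costing a factor $\kappa^{2}$ in $m$. Inserting the \emph{known} operator preconditioner $\Acal^{-1}$ makes the increment $P_{T}-P_{T}M_{\ell}\Acal^{-1}P_{T}$ \emph{mean zero} (since $\EE[M_{\ell}\Acal^{-1}]=\Id$), so the Bernstein bound of Step~2 gives $\|q_{\ell}\|_{2}\le\tfrac12\|q_{\ell-1}\|_{2}$ in every round, hence $\|q_{L}\|_{2}\le 2^{-L}\sqrt{s}\le\varepsilon$; simultaneously a scalar Bernstein bound (using again the second incoherence condition) shows that $\|P_{T^{c}}(M_{\ell}\Acal^{-1}q_{\ell-1})\|_{\infty}$ is geometrically small in $\ell$ --- its expectation vanishes off $T$ --- so $\|P_{T^{c}}y\|_{\infty}\le\sum_{\ell}\|P_{T^{c}}(M_{\ell}\Acal^{-1}q_{\ell-1})\|_{\infty}\le\tfrac14$. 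Taking the batches large enough --- the first few of size a constant times $\kappa\mu\omega s\log n$, the rest geometrically smaller, as in~\cite{candes_probabilistic_2011} --- lets a single union bound over (i), the $L$ contraction estimates and the $L$ off-support estimates succeed with total failure probability at most $\mathrm{e}^{-\omega}$. It then remains to track the constants and the $\omega$- and $\log$-budgets through this union bound, which yields the stated bound $m\geq C\kappa\mu\omega^{2}s\log n$.
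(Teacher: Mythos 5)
Your proposal captures the paper's central idea precisely: precondition the golfing increments by $X=\EE[aa^{*}]^{-1}$ so that $\EE\!\left[M_{\ell}X\right]=\Id$ makes the residual recursion mean zero, control the increments by matrix and vector Bernstein inequalities, and use the first incoherence bound for the almost-sure term in the local isometry estimate and the second incoherence bound for the golfing increments. You also correctly identify the ``main obstacle''---a naive golfing scheme would contract only at rate $1-O(\kappa^{-2})$---and the fix by operator preconditioning, which is exactly the paper's contribution over \cite{candes_probabilistic_2011}. That said, a few departures from the paper's execution are worth flagging, one of which is a genuine gap.

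\emph{Different deterministic recovery lemma.} The paper's Lemma~\ref{lem:inexact} assumes both $\|(P_{T}XA^{*}AP_{T})^{-1}\|_{\infty}\le 2$ and $\max_{i\in T^{c}}\|P_{T}XA^{*}Ae_{i}\|_{2}\le 1$ (the latter supplied by Lemma~\ref{lem:e4}), which yields the clean estimate $\|h_{T}\|_{2}\le 2\|h_{T^{c}}\|_{1}$ and permits $\varepsilon=1/4$. You drop the second condition and instead chain the cruder bounds $\|h_{T}\|_{2}\le\sqrt{2\kappa}\|Ah_{T}\|_{2}$ and $\|Ah_{T^{c}}\|_{2}\le\sqrt{\mu}\|h_{T^{c}}\|_{1}$, which forces $\varepsilon\asymp(\kappa\mu)^{-1/2}$ and hence $L\asymp\log(s\kappa\mu)$ rather than $L\asymp\log s$ golfing rounds. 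This is a legitimate alternative, and it dispenses with one of the paper's four assumptions and one of its four lemmas; but it costs roughly an extra $\log(\kappa\mu)$ in the round count, which you would have to absorb into the constant using the (implicit) fact that a useful bound needs $\kappa\mu\lesssim n/(s\log n)$. You also use the symmetric preconditioner $\Acal_{TT}^{-1/2}(\cdot)\Acal_{TT}^{-1/2}$ for local isometry where the paper uses the one-sided $X(\cdot)$; both work.

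\emph{The gap.} You propose to close the probability budget ``by a single union bound over (i), the $L$ contraction estimates and the $L$ off-support estimates,'' with the later batches taken ``geometrically smaller.'' Neither part survives inspection. The batch size must not fall below $\propto\kappa\mu\omega s$ or the Bernstein bound no longer yields a constant contraction rate per round, so ``geometrically smaller'' is the wrong direction---the paper keeps a constant floor $m_{i}=694\kappa_{s}\mu\omega s$ for $i\ge 3$. With that floor, the per-batch failure probability for $i\ge 3$ is only a constant (e.g.\ $\le 1/12$), so a bare union bound over $L=\Theta(\log s)$ (or, in your variant, $\Theta(\log(s\kappa\mu))$) batches cannot deliver failure probability $\mathrm{e}^{-\omega}$ for small $\omega$ and large $s$: one would need the constant in $m_{i}$ to grow like $\log\log n$. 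The paper avoids this with a retry mechanism: a batch failing (\ref{eq:golfing1}) or (\ref{eq:golfing2}) is discarded and replaced by a fresh one, and a binomial concentration bound shows that, with probability $1-\tfrac16\mathrm{e}^{-\omega}$, at most $l'=4(\omega+\log 12+\tfrac23 l)$ draws are needed in total; only the first two (large) batches must succeed on their first try. This device is essential to making the $\omega^{2}$-dependence and the constant in $m\ge C\kappa\mu\omega^{2}s\log n$ come out as stated, and it is missing from your plan.

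Apart from that omission and the batch-size slip, the scheme you sketch is the paper's own, and the rest of your reasoning (Step~1 convex geometry, Step~2 Bernstein estimates, and the role of the two incoherence conditions) is sound.
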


In the statement above, $C$ is a constant less than $18044$. For $n,s$
sufficiently large, the value may be improved to $C\leq 228$. We have
made no attempts to optimize these constants. 

Comparing these two theorems, we see that the effect of dropping the
isotropy constraint on the ensemble can essentially be captured in a
single, simple quantity: the condition number $\kappa$ of the
covariance matrix. All other minor differences between Theorem~1 and
Theorem~2 result from slightly different proof techniques.

\subsection{Improvements}

A first way of improving the result is based on a definition borrowed
from
\cite[Def.\ 1.2]{rudelson_reconstruction_2011}
\footnote{
	In fact, our definition differs very slightly from
	\cite{rudelson_reconstruction_2011}:
	their $\rho_{\max}(s,X)$ is the square of our $\lambda_{\max}(s,X)$.
	We opted for this change because the notions defined here reduce to
	the ordinary eigenvalues in the case of $s=n$.
}:

\begin{definition}
	The \emph{largest and smallest $s$-sparse eigenvalue} of a matrix
	$X$ are given by
	\begin{equation*}
		\lambda_{\mathrm{max}}(s, X) :=
		\max_{v, \|v\|_0 \leq s} \frac{\|X v\|_2}{\|v\|_2}, \qquad
		\lambda_{\mathrm{min}}(s, X) :=
		\min_{v, \|v\|_0 \leq s} \frac{\|X v\|_2}{\|v\|_2},
	\end{equation*}
	where $\|v\|_0 = | \mathrm{supp}(v)|$ denotes the cardinality of the support (i.e. the sparsity) of $v$.
	The \emph{$s$-sparse condition number}
	\footnote{
		Estimating $\mathrm{cond}(s,X)$ is equivalent to computing the RIP
		constants of $X$ (c.f.\ e.g.\ \cite{juditsky_verifiable_2011}).
		There are currently no tractable methods known for computing these
		numbers for any concrete set of matrices. We want to emphasize
		that while the mathematical concept of ``RIP
		constants'' appears in our sharpened result, its use here is completely
		different from the way it would be employed in RIP-based
		approaches to 
		compressed sensing.
		To wit, we apply the concept to the
		\emph{expected sensing matrix} (and its inverse), but not to 
		any actual instances.
	}	
	of $X$ is
	\begin{equation*}
		\mathrm{cond}(s,X):=
		\frac{
			\lambda_{\mathrm{max}}(s,X)
		}{
			\lambda_{\mathrm{min}}(s,X)
		}.
	\end{equation*}
\end{definition}

Based on this notion, one can state a strictly stronger version of the
Main Theorem (which is the form we will prove in
Section~\ref{sec:proof}):

\begin{theorem}	\label{kappa_theorem}
	With
	\begin{equation*}
		\kappa_s := \max \left\{ \mathrm{cond}(s,\Sigma),\mathrm{cond}(s,\Sigma^{-1})\right\},
	\end{equation*}
	the conclusion of the main Theorem~\ref{thm:main} continues to hold if
	the lower bound on $m$ is weakened to
	\begin{equation*}
		m \geq C\mu\,\kappa_s\,\omega^2 s \log n,
	\end{equation*}
	for the same constant $C$.
\end{theorem}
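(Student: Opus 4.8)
The plan is to follow the now-standard dual-certificate approach to $\ell_1$-minimization, but applied to the \emph{preconditioned} ensemble. First I would observe that the invertibility of $\EE[aa^*]$ lets us pass from the anisotropic ensemble $\{a_i\}$ to the isotropic ensemble $\{b_i\} := \{\EE[aa^*]^{-1/2} a_i\}$, which satisfies $\EE[bb^*]=\Id$. The feasibility constraint $A\bar x = Ax$ is unchanged under left-multiplying each row by the fixed invertible matrix $\EE[aa^*]^{-1/2}$, so the minimizer $x^\star$ of \eqref{eqn:l1} is the same whether we use the $a_i$ or the $b_i$. However, the \emph{incoherence} of the new ensemble is no longer controlled by $\mu$ alone: the two conditions in \eqref{eqn:incoherence2} are precisely designed so that $|\langle b, e_i\rangle|^2 = |\langle a, \EE[aa^*]^{-1}e_i\rangle \cdot (\text{something})|$ — more carefully, $\langle b,e_i\rangle = \langle a, \EE[aa^*]^{-1/2} e_i\rangle$, and expanding $e_i$ in the eigenbasis of $\EE[aa^*]$ shows that $|\langle b, e_i\rangle|^2$ is sandwiched between the two quantities bounded by $\mu$ in \eqref{eqn:incoherence2}, up to factors governed by the condition number. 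This is where $\kappa$ (or $\kappa_s$ in the refined version) enters.

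Next I would set up the dual certificate. By the standard optimality conditions for \eqref{eqn:l1} (as used in \cite{candes_probabilistic_2011}), it suffices to show two things: (i) the matrix $A$ is injective on vectors supported on $T := \supp(x)$, equivalently $A_T^* A_T$ is well-conditioned; and (ii) there exists a vector $y$ in the row space of $A$ with $P_T y = \operatorname{sgn}(x)$ and $\|P_{T^c} y\|_\infty < 1$, where $P_T$ is the coordinate projection onto $T$. For the isotropic ensemble $\{b_i\}$, the matrix $B^* B = \frac1m \sum_i b_i b_i^*$ has expectation $\Id$, so $B_T^* B_T$ concentrates around $\Id_T$ by an operator Bernstein inequality — this requires $m \gtrsim \mu_b\, s \log n$ measurements where $\mu_b$ is the incoherence of the $b$-ensemble, and by the previous paragraph $\mu_b \lesssim \kappa_s\, \mu$ (using $\lambda_{\min}(s,\cdot)$ and $\lambda_{\max}(s,\cdot)$ to get the sparse rather than full condition number, since only $s$-sparse vectors are tested). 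For (ii), I would build $y = B^* B^{-1}_T \cdot(\dots)$ via the golfing scheme: partition the $m$ measurements into $\ell \approx \log n$ batches, and iteratively construct approximations $y^{(k)}$ with the residual $P_T y^{(k)} - \operatorname{sgn}(x)$ shrinking geometrically while $\|P_{T^c} y^{(k)}\|_\infty$ stays summably small. Each golfing step again invokes operator Bernstein / vector Bernstein bounds whose required sample size scales with $\mu_b$, hence with $\kappa_s \mu$.

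The key point making this work cleanly is that every probabilistic estimate in the isotropic proof depends on the ensemble only through (a) its incoherence parameter and (b) the fact that $\EE[bb^*]=\Id$. Since we have arranged (b) exactly, and since (a) degrades by at most a factor $\kappa_s$, we can quote the isotropic machinery essentially verbatim, replacing $\mu \mapsto \kappa_s \mu$ (plus the $\omega^2$ and constant-factor bookkeeping that comes from wanting the stronger $1 - e^{-\omega}$ tail rather than the polynomial tail of Theorem~1). The scaling convention \eqref{eqn:scaling}, justified by Lemma~\ref{lem:e0}, ensures the rescaling is harmless and that $\lambda_{\max}(\EE[aa^*])$ and $\lambda_{\min}(\EE[aa^*])^{-1}$ are both bounded by $\kappa$ (respectively $\kappa_s$ on sparse vectors), which is what lets the two halves of \eqref{eqn:incoherence2} combine into a single $\kappa_s\mu$ bound.

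The main obstacle I anticipate is carefully controlling the incoherence transfer on \emph{sparse} vectors so that the genuinely sharp constant $\kappa_s = \max\{\operatorname{cond}(s,\Sigma),\operatorname{cond}(s,\Sigma^{-1})\}$ appears, rather than the cruder global $\kappa$. In the golfing scheme and in the injectivity estimate, the random matrices we must bound only ever act on, or produce, vectors supported on the $s$-element set $T$ (or are tested against such vectors in the relevant norms). So one must phrase each Bernstein bound so that the operator norms involved are restricted to the $s$-sparse cone, replacing quantities like $\|\EE[aa^*]^{-1/2}\|$ with $\lambda_{\min}(s,\Sigma)^{-1} = \lambda_{\max}(s,\Sigma^{-1})$ and so on. Once that bookkeeping is done consistently — and checking that the golfing iteration's geometric decay is unaffected — the result follows with the same constant $C$, since the structure of the estimates is identical and only the numerical value of the incoherence-type parameter has changed.
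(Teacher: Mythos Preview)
Your reduction to the isotropic case has a genuine gap. Replacing $a_i$ by $b_i = \EE[aa^*]^{-1/2} a_i$ changes the sensing matrix to $B = A\Sigma^{-1}$, and the feasible set $\{\bar x : A\bar x = Ax\} = x + \ker A$ becomes $x + \ker(A\Sigma^{-1}) = x + \Sigma\ker A$. These two affine spaces coincide only when $\ker A$ happens to be $\Sigma$-invariant, which fails generically; hence the $\ell_1$-minimizer with respect to $A$ is \emph{not} the same as the one with respect to $B$, and you cannot simply invoke the isotropic theory. (Equivalently: changing variables to $y=\Sigma\bar x$ does keep the constraint $By=B\Sigma x$, but then you are minimizing the weighted norm $\|\Sigma^{-1}y\|_1$, and the target $\Sigma x$ is no longer $s$-sparse.) Your incoherence-transfer step is also incomplete: $\langle b,e_i\rangle = \langle a,\Sigma^{-1}e_i\rangle$, and $\Sigma^{-1}e_i$ is neither a standard basis vector nor $Xe_i$, so neither half of \eqref{eqn:incoherence2} bounds it directly; the ``sandwiching'' you sketch does not follow.

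The paper avoids this by \emph{not} whitening the ensemble. It keeps the original $A$ throughout, so the dual certificate remains in the row space of $A$ as Lemma~\ref{lem:inexact} requires. The preconditioning is instead inserted asymmetrically into the certificate construction and the estimates: the golfing iteration uses the operator $A_i^*A_iX$ with $X=(\EE[aa^*])^{-1}$, which has expectation $\Id$, and produces $v=\sum_i \tfrac{m}{m_i}A_i^*A_iXP_Tq_{i-1}\in\mathrm{row}(A)$. The fundamental estimates (Lemmas~\ref{lem:e1}--\ref{lem:e4}) then control quantities such as $\|P_T(XA^*A-\Id)P_T\|_\infty$ and $\|P_T(\Id-A^*AX)z\|_2$; the factor $\kappa_s$ enters only through $\|P_TXP_T\|_\infty\le\kappa_s$ and $\|P_TX^{-1}P_T\|_\infty\le\kappa_s$, which is exactly why the $s$-sparse condition number rather than the global $\kappa$ appears. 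This asymmetric use of $X$ is the new ingredient, and it is not equivalent to a change of ensemble.
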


We further suspect that the second incoherence condition in
(\ref{eqn:incoherence2}) can be relaxed. Two alternative
bounds not relying on that condition are stated in
Proposition~\ref{prop:incoherence2less} below. (The modifications of
our proof necessary to arrive at these improved estimates will be
sketched after Lemma~\ref{lem:e1}).

\begin{proposition}
	\label{prop:incoherence2less}
	Let $K$ be a constant such that
	\begin{equation*}
		2 \left\| [a a^*, \EE[a a^*]^{-1} ] \right\|_{\infty} \leq K
	\end{equation*}
	holds almost surely, where $[\cdot, \cdot]$ denotes the commutator 
	($[A,B] = AB - BA$)
	and $\| \cdot \|_\infty$ is the operator norm.

	If the requirement (\ref{eqn:incoherence2}) is not necessarily
	fulfilled, the conclusions of Theorem~2 remain valid if the sampling
	rate is bounded below by either
	\begin{equation}
		\label{eqn:2less1}
		m \geq C  \kappa \mu \omega^2 s^2 \log n
	\end{equation}
	or
	\begin{equation}
		\label{eqn:2less2}
		m \geq C ( \kappa \mu s+ K) \omega^2  \log n.
	\end{equation}
\end{proposition}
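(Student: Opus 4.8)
The plan is to return to the proof of Theorem~\ref{kappa_theorem} in Section~\ref{sec:proof} and to locate precisely where the second inequality of (\ref{eqn:incoherence2}) is used. That proof produces an inexact dual certificate by the golfing scheme, and the whole argument rests on a small number of operator- and vector-valued Bernstein estimates for i.i.d.\ sums built from the $a_i$. Each such estimate contributes a ``variance'' term and a ``maximal-summand'' term. On inspection, every variance term that involves $\EE[aa^*]^{-1}$ collapses, after taking the expectation, to a quantity of the form $\langle q,\EE[aa^*]^{-1}q\rangle\le\lambda_{\mathrm{min}}(\EE[aa^*])^{-1}\|q\|_2^2$, i.e.\ it is controlled by the first incoherence condition and the condition number alone and is therefore insensitive to the present improvements. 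The second incoherence inequality is invoked in \emph{exactly one} place: to bound the maximal summand of the Bernstein estimates that govern the golfing iterates, namely through $\max_i|\langle a,\EE[aa^*]^{-1}e_i\rangle|\le\sqrt\mu$, equivalently to bound $|\langle a,\EE[aa^*]^{-1}q\rangle|$ for $s$-sparse $q$ by $\sqrt{\mu s}\,\|q\|_2$ --- the dense vector $\EE[aa^*]^{-1}q$ being otherwise the only object in the argument that is not manifestly ``incoherent''. Both improvements consist in replacing this single estimate; everything else, including the numerical constant $C$, is left untouched.

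For the bound (\ref{eqn:2less1}): one drops the second incoherence condition and, in place of $|\langle a,\EE[aa^*]^{-1}q\rangle|\le\sqrt{\mu s}\,\|q\|_2$, uses only the cruder estimate still available from the first incoherence condition together with the a priori bound on $\EE[aa^*]^{-1}$. This costs one extra factor of $\sqrt s$ in the maximal-summand term, and since --- once the variance contribution has been absorbed --- the per-batch sampling requirement in the golfing analysis is proportional to that term, the net effect is to replace $s$ by $s^2$ in the final sampling rate, leaving the structure of the proof and the constant unchanged.

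For the bound (\ref{eqn:2less2}): here one keeps the full information and instead \emph{reorders} the offending summand by means of the commutator identity
\begin{equation*}
	a a^*\,\EE[a a^*]^{-1}=\EE[a a^*]^{-1}\,a a^*+\bigl[a a^*,\EE[a a^*]^{-1}\bigr].
\end{equation*}
Inside each golfing batch this rewrites the contribution $\frac1m\sum_i a_i a_i^*\,\EE[aa^*]^{-1}q$, with $q$ an $s$-sparse vector, as $\EE[aa^*]^{-1}\bigl(\frac1m\sum_i a_i a_i^* q\bigr)$ plus the centred i.i.d.\ sum $\frac1m\sum_i[a_ia_i^*,\EE[aa^*]^{-1}]q$ (centred because $[\EE[aa^*],\EE[aa^*]^{-1}]=0$). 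In the first piece the troublesome matrix $\EE[aa^*]^{-1}$ has been moved to the outside, where it is a fixed operator of norm $\lambda_{\mathrm{min}}(\EE[aa^*])^{-1}$, so the piece is controlled by the concentration of $\frac1m\sum_i a_ia_i^*$ on $s$-sparse vectors that is already part of the proof of Theorem~\ref{thm:main}; tracking the factors, it contributes the $\kappa\mu s$ summand. The second, commutator, piece is handled by one additional operator Bernstein estimate: each summand is centred and has operator norm at most $K/2$ almost surely (so that its variance is bounded in the same stroke), and requiring this piece to remain negligible throughout the golfing iteration amounts to the extra condition $m\gtrsim K\omega^2\log n$. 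Adding the two requirements gives (\ref{eqn:2less2}).

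The main work --- and the only genuinely delicate part --- is the bookkeeping behind the second improvement: one has to check that after the commutator split \emph{every} appearance of $\EE[aa^*]^{-1}$ in the golfing argument, including the off-support estimates that keep the off-support part of the certificate below $1$ in modulus, can be routed either through the ``fixed bounded operator'' bound (accounting for $\kappa\mu s$) or through the commutator sum (accounting for the additive $K$), and that the new Bernstein sum never degrades a variance term. Once this routing is verified, re-running the golfing iteration with the modified per-step estimates and re-choosing the number of batches is routine and reproduces the claimed sampling rates with the same numerical constant $C$.
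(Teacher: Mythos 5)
Your proposal rests on a mis-diagnosis of where the second incoherence condition in (\ref{eqn:incoherence2}) actually bites, and this propagates into both alternative bounds. You claim that it enters only through a maximal-summand estimate for the golfing iterates, and that every variance term involving $X=\EE[aa^*]^{-1}$ collapses after taking expectations to something of the form $\langle q,Xq\rangle$. The second half of that claim is false for the one term that matters: in the proof of Lemma~\ref{lem:e1}, the variance term
\begin{equation*}
\bigl\|\EE[M_k^*M_k]\bigr\|_\infty
\;\sim\;
\bigl\|\EE\bigl[P_T a_k\,\langle a_k, X P_T X a_k\rangle\,a_k^* P_T\bigr]\bigr\|_\infty
\end{equation*}
does not reduce to a single quadratic form in $X$, because the scalar $\langle a_k, X P_T X a_k\rangle$ is random and sits inside the same expectation as $P_T a_k a_k^* P_T$. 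This is exactly the estimate (\ref{eqn:incoherence2used}), and the paper states explicitly that it is the \emph{only} place where the second incoherence inequality is essentially needed (everywhere else the crude bound $\|P_T X a_k\|_2\le\|X\|_\infty\|P_T a_k\|_2$ suffices, at worst trading $\kappa_s$ for $\kappa$). Your plan to modify the golfing iterates and the associated Bernstein estimates of Lemmas~\ref{lem:e2}--\ref{lem:e4} targets the wrong part of the argument; the modification belongs entirely inside the local-isometry Lemma~\ref{lem:e1}, which controls the event (\ref{eq:ass1}) and enters only through $p_4$.

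Concretely, for (\ref{eqn:2less1}) the paper does not accept a factor $\sqrt s$ on a maximal summand; it replaces the almost-sure bound $\langle a_k,XP_TXa_k\rangle\le s\mu$ (which requires the second incoherence) by the weaker step of bounding $\|P_T a_k a_k^* P_T\|_\infty\le s\mu$ pointwise and then taking the expectation of the scalar, $\EE[\langle a_k,XP_TXa_k\rangle]=\tr(P_TX)\le s\kappa_s$. Multiplying gives $s^2\mu\kappa_s$ in place of $s\mu\kappa_s$, hence the extra factor $s$ in the sampling rate. For (\ref{eqn:2less2}) the commutator identity is indeed the right idea, but it is applied inside this same variance expectation, rewriting $\EE[P_T a_k a_k^* X P_T X a_k a_k^* P_T]$ as $\EE[P_T X a_k a_k^* P_T a_k a_k^* X P_T]$ (which is $\EE[M_k M_k^*]$, already controlled by $\mu s\kappa_s$) plus commutator corrections whose norm is bounded a.s.\ by $K\,\|P_T X a_k a_k^* P_T\|_\infty$; taking expectations and using $\EE[P_TXa_ka_k^*P_T]=P_T$ gives the additive $K$. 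There is no separate ``extra Bernstein sum'' for the commutator: the $K$ is absorbed directly into the variance parameter $\sigma^2$ of the one Bernstein application already present in Lemma~\ref{lem:e1}. Your version, which commutes $X$ out of the golfing sum $\frac1m\sum_i a_ia_i^* X q$ and then runs a fresh concentration estimate on $\frac1m\sum_i[a_ia_i^*,X]q$, is a different and unsubstantiated strategy; in particular the piece $X\bigl(\frac1m\sum_i a_ia_i^* q\bigr)$ does not have the form $P_TXA^*AXP_Tq$ that the golfing iteration actually needs, so the claimed routing ``through the fixed bounded operator bound'' does not match the certificate construction.
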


The commutator bound (\ref{eqn:2less2}) is particularly relevant
for ensembles corresponding to non-uniform samples from an orthogonal
basis. In that case, $\EE[a a^*]$ and $a a^*$ commute with
probability one, so that $K$ may be chosen to be zero. 

There is another degree of freedom which we have not yet systematically
explored: Note that the minimizer of the convex optimization
(\ref{eqn:l1}) does not change if we re-scale \emph{individual}
vectors $a_i \mapsto \nu_i a_i$ for some set of non-zero numbers $\nu_i$. While
we have chosen a \emph{global} scale for the covariance matrix (c.f.\
Lemma~\ref{lem:e0}), the individual weights remain free parameters
that may be used to optimize the sampling rate.  Pursuing this problem
further seems likely to be fruitful.

We remark that the incoherence conditions can be relaxed to hold only
with high probability. This opens up our results to, for example, the
case of Gaussian measurement vectors. The details can be developed in
complete analogy to Ref.~\cite[Appendix B]{candes_probabilistic_2011}.

Lastly, all statements remain true if the measurement vectors are
drawn ``without replacement'' instead of independently --
c.f.~\cite{gross_note_2010} for details.

\section{Relation with previous work and history}
\label{sec:relations}

Most results on sparse vector recovery have relied on certain
conditions that quantify how much a given sampling matrix $A$ distorts
the geometry of the set of all sparse vectors. By far the most
prominent example in that regard is the \emph{restricted isometry
property} (RIP)
\cite{donoho_compressed_2006,rudelson_reconstruction_2011} which
measures the extent to which $A$ deviates from preserving Euclidean distances
between sparse vectors. Conceputally close variations of the RIP
include the \emph{restricted eigenvalue condition} introduced in
\cite{bickel_simultaneous_2009}, or the \emph{restricted correlation
assumption} \cite{bickel_discussion_2007}. Another example is the
\emph{width property} advanced in \cite{kashin_remark_2007}---a Banach
space-theoretic condition that seems to be weaker than the RIP.

From roughly 2008 on, the conceptually strongly related problem of
recovering a low-rank matrix from few expansion coefficients with
respect to a fixed matrix basis has come more and more into focus
\cite{candes_exact_2009,candes_power_2010}. There seems to be no easy
way to directly translate the geometric approaches mentioned above to
the general low-rank matrix recovery problem.  Instead, the pioneering
publications on the matrix problem used fairly elaborate methods from
convex duality theory \cite{candes_exact_2009,candes_power_2010}.
(However, c.f.\
\cite{liu_universal_2011, candes_tight_2011, flammia_quantum_2012} for interesting special cases
where RIP-based techniques \emph{are} applicable to low-rank matrix recovery
problems; and \cite{negahban_restricted_2010} for a related
``restricted strong convexity'' property with consequences for matrix
recovery).

In \cite{gross_quantum_2010,gross_recovering_2011} the second author
and his collaborators introduced a simplified approach to the low-rank
matrix recovery problem. While these works still build on the convex
framework of \cite{candes_exact_2009,candes_power_2010}, they
incorporate several new ideas.
These include the use of non-commutative
large deviation theorems originating from quantum information theory
\cite{ahlswede_strong_2002,tropp_user-friendly_2011}, randomized
constructions based on i.i.d.\ samples of the measurement vectors, and
a certain iterative ``golfing scheme'' for the construction of inexact
dual certificates. These techniques were later modified and adapted to
the original sparse-vector setting in
\cite{candes_probabilistic_2011}. This showed that the conceptual
closeness of the matrix and the vector theory may be used to devise
very similar proofs.

	This ``RIPless'' approach to compressed sensing leads arguably to
	simpler proofs and gives tighter bounds at least for the noise-free
	recovery problem. As far as we know, RIP-based arguments still perform
	superior in the important noisy regime.

	The work \cite{candes_probabilistic_2011} did not include a systematic
	study of non-isotropic ensembles (however, ``small'' deviations from
	isotropy were discussed in Appendix~B). In fact, E.\ Cand\`es
	\cite{candes_probabilistic_2011} suggested to us the problem of
	finding a generalization of the golfing scheme that could cope with
	anisotropic ensembles. This has been achieved by the first author of
	this paper during a research project under the supervision of the
	second author \cite{kueng_richard_efficient_????}. This explains the close relation between
	\cite{candes_probabilistic_2011} and the present work.

	An analysis of anisotropic compressed sensing within the original RIP
	framework has been carried out by other authors, most notably in
	\cite{rudelson_reconstruction_2011}. Since their paper does not
	directly address the noise-free case, a direct comparison of
	statements is difficult. The closest result to ours seems to appear in
	Section~1.3, where a bound of 
	\begin{equation*}
		m \geq \mathcal{O} \big(s M^2 \log n\,\log^3(s \log n)\big)
	\end{equation*}
	for the sampling rate is given. The quantity $M$ is an upper bound on
	the largest coefficient for the measurement vectors $a_i$, related to
	our parameter $\mu$. The
	\emph{big-Oh} notation hides a constant proportional to $\kappa$
	($\rho^{-1}$ in the language of \cite{rudelson_reconstruction_2011}).
	Thus, the basic structure of the solutions is very similar. However,
	some important differences are these:
	\begin{itemize}
		\item
		We do not incur the $\log^3$-term, which is a major advantage of our method.
		Up to a constant factor, our required sampling rate corresponds to the theoretical lower limit.

		\item
		The result in \cite{rudelson_reconstruction_2011} holds
		\emph{uniformly} in the sense that with their probability of
		success, one obtains a sampling matrix which works simultaneously for
		all sparse vectors. This is not the case for us.

		\item
		We have proved no results on noise-resilience. While, following
		\cite{candes_probabilistic_2011}, it should be straight-forward to
		do so, the results may be worse than the RIP-based ones in
		\cite{rudelson_reconstruction_2011}.

		\item
		The proof methods are completely different.
	\end{itemize}
		
	\section{Proof}
	\label{sec:proof}

	The proof is conceptually close to \cite{candes_probabilistic_2011},
	which in turn closely resembles \cite{gross_recovering_2011}. Here we give a largely self-contained presentation.

	\subsection{Notation}
		
		Throughout this paper, we will use the following conventions:	\newline
		If a statements holds almost surely, we will abbreviate this by a.s. 
		In the case of vectors, $\|\cdot\|_p$ denotes the $\ell_p$-norm, whereas
		in the operator case $\|\cdot\|_p$ refers to the Schatten-$p$ norm
		(i.e. the $\ell_p$-norm of the singular values). 
		The letter $z$ will always denote a vector in $\CC^n$ 
	  supported 
		on a set $T$ of cardinality at most $s$ (i.e.\ $z$ is $s$-sparse).
		$T^c$ shall denote the complement of $T$, and $P_T$ ($P_{T^c}$) refers to
		the orthogonal projector onto the set of all vectors supported on
		$T$ ($T^c$). 
		Finally we will use the following technical definitions:
		\begin{equation*}
			X =(\EE[a a^*])^{-1}=\Sigma^{-2}, \qquad
			X_T = P_T X P_T.
		\end{equation*}

	\subsection{Large deviation bounds}

	A central role in the argument is played by certain large deviation
	bounds for sums of matrix-valued random variables. These have been
	introduced in \cite{ahlswede_strong_2002} in the context of quantum information
	theory. The first application to matrix completion and compressed
	sensing problems, as well as the first ``Bernstein version'' taking
	variance information into account, appeared in
	\cite{gross_quantum_2010,gross_recovering_2011}. The
	version we will be making use of derives from Theorem 1.6 in
	\cite{tropp_user-friendly_2011}.

	\begin{proposition}[Matrix Bernstein inequality
	\cite{tropp_user-friendly_2011}]\label{prop:bernstein}
		Consider a finite sequence $\left\{M_k\right\}\in \mathbb{C}^{d\times d}$ of independent, 
		random matrices. Assume that each random matrix satisfies
		$\EE\left[M_k\right]=0$ and $\left\Vert M_k\right\Vert _{\infty}\leq B$ a.s.
		and define 
		\begin{equation*}
		\sigma^2:=
		\max\left\{\|\sum_k\EE\left(M_kM_k^*\right)\|_{\infty},
			\|\sum_k\EE\left(M_k^*M_k\right)\|_{\infty}\right\}.
		\end{equation*}
		Then for all $t\geq0$,
		\begin{equation}\label{eqn:bernstein}
		\operatorname{Pr}\left(\| \sum_{k}M_k\|_{\infty} \geq t\right)
			\leq 2d\exp\left( -\frac{t^{2}/2}{\sigma^{2}+Bt/3}\right) .
		\end{equation}
	\end{proposition}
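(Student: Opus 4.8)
The plan is to follow the matrix Laplace transform method of Ahlswede--Winter, in the form sharpened by Tropp. First I would reduce to the Hermitian case via the dilation $\mathcal{H}(M) = \left(\begin{smallmatrix} 0 & M \\ M^* & 0\end{smallmatrix}\right)$. The relevant elementary identities are $\bigl\|\sum_k M_k\bigr\|_\infty = \lambda_{\max}\bigl(\sum_k \mathcal{H}(M_k)\bigr)$ (the spectrum of $\mathcal{H}(M)$ consists of $\pm$ the singular values of $M$, hence is symmetric), $\|\mathcal{H}(M_k)\|_\infty = \|M_k\|_\infty \leq B$, $\EE\,\mathcal{H}(M_k) = \mathcal{H}(\EE M_k) = 0$, and $\mathcal{H}(M_k)^2 = \left(\begin{smallmatrix} M_kM_k^* & 0 \\ 0 & M_k^*M_k\end{smallmatrix}\right)$, so that $\bigl\|\sum_k \EE\,\mathcal{H}(M_k)^2\bigr\|_\infty = \sigma^2$. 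Thus it is enough to bound $\operatorname{Pr}\bigl(\lambda_{\max}(\sum_k Y_k)\geq t\bigr)$ for independent Hermitian $2d\times 2d$ matrices $Y_k := \mathcal{H}(M_k)$ with $\EE Y_k = 0$, $\|Y_k\|_\infty\leq B$ and $\bigl\|\sum_k\EE Y_k^2\bigr\|_\infty\leq \sigma^2$.

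For this I would invoke the matrix Laplace transform bound: for Hermitian random $Y$ and $\theta>0$, $\operatorname{Pr}(\lambda_{\max}(Y)\geq t) \leq \mathrm{e}^{-\theta t}\,\EE\,\Tr\,\mathrm{e}^{\theta Y}$, which follows from Markov's inequality together with $\mathrm{e}^{\theta\lambda_{\max}(Y)}\leq \Tr\,\mathrm{e}^{\theta Y}$ (as $\mathrm{e}^{\theta Y}\succeq 0$). The task is then to control the matrix moment generating function $\EE\,\Tr\exp(\theta\sum_k Y_k)$, which I would do in two steps. The \emph{first} step -- and the one I expect to be the real obstacle -- is the subadditivity of the matrix cumulant generating function,
\begin{equation*}
	\EE\,\Tr\exp\Bigl(\sum_k \theta Y_k\Bigr) \leq \Tr\exp\Bigl(\sum_k \log\EE\,\mathrm{e}^{\theta Y_k}\Bigr).
\end{equation*}
This is the point where non-commutativity genuinely enters: one conditions on the $Y_k$ successively and applies Lieb's concavity theorem (that $A\mapsto\Tr\exp(H+\log A)$ is concave on the positive-definite cone) together with Jensen's inequality. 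The \emph{second} step is scalar in spirit: since $x\mapsto(\mathrm{e}^x-1-x)/x^2$ is increasing and $\lambda_{\max}(\theta Y_k)\leq\theta B$, the transfer rule gives $\mathrm{e}^{\theta Y_k}\preceq \Id+\theta Y_k+g(\theta)\,Y_k^2$ with $g(\theta)=(\mathrm{e}^{\theta B}-1-\theta B)/B^2$; taking expectations and using $\EE Y_k=0$ yields $\EE\,\mathrm{e}^{\theta Y_k}\preceq\Id+g(\theta)\EE Y_k^2$, and therefore $\log\EE\,\mathrm{e}^{\theta Y_k}\preceq g(\theta)\EE Y_k^2$ by operator monotonicity of the logarithm and $\log(\Id+C)\preceq C$.

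Combining the two steps with the monotonicity of $\Tr\exp$ with respect to the Loewner order and the bound $\Tr\exp(W)\leq 2d\,\mathrm{e}^{\lambda_{\max}(W)}$, I obtain
\begin{equation*}
	\EE\,\Tr\,\mathrm{e}^{\theta\sum_k Y_k} \leq \Tr\exp\Bigl(g(\theta)\sum_k\EE Y_k^2\Bigr) \leq 2d\,\exp\bigl(g(\theta)\,\sigma^2\bigr),
\end{equation*}
hence $\operatorname{Pr}(\lambda_{\max}(\sum_k Y_k)\geq t)\leq 2d\exp(-\theta t+g(\theta)\sigma^2)$. The last step is routine optimization over $\theta$: using $g(\theta)\leq\tfrac{\theta^2/2}{1-\theta B/3}$ for $0<\theta<3/B$ (a termwise power-series comparison, valid for the eventual choice since $\theta B<3$ automatically) and choosing $\theta=t/(\sigma^2+Bt/3)$ makes the exponent equal to $-\tfrac{t^2/2}{\sigma^2+Bt/3}$, which gives \eqref{eqn:bernstein}. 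Since this proposition is quoted verbatim from \cite{tropp_user-friendly_2011}, in the paper itself I would of course simply cite it rather than reproduce the argument.
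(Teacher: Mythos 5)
Your sketch is correct, and you are right that the paper itself offers no proof of this proposition: it is quoted verbatim from \cite{tropp_user-friendly_2011} and simply cited. What you have written is an accurate reconstruction of Tropp's argument (Hermitian dilation, the matrix Laplace transform bound, subadditivity of the matrix cumulant generating function via Lieb's concavity theorem, the scalar MGF bound with $g(\theta)=(\mathrm{e}^{\theta B}-1-\theta B)/B^2$, and the choice $\theta=t/(\sigma^2+Bt/3)$), and there is nothing in the paper to compare it against beyond the citation.
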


	We will also require a vector-valued deviation estimate. While one
	could in principle obtain such a statement by applying
	Proposition~\ref{prop:bernstein} to diagonal matrices, a direct
	argument does away with the dimension factor $d$ on the r.h.s.\ of 
	(\ref{eqn:bernstein}). This will save a logarithmic factor in the
	sampling rate of the Main Theorem. The particular vector-valued
	Bernstein inequality below is based on the exposition in 
	\cite{ledoux_probability_1991} (Chapter 6.3, equation (6.12)), with a direct proof appearing in
	\cite{gross_recovering_2011}.

	\begin{proposition}[Vector Bernstein inequality]
		Let 
		$\left\{ g_k\right\} \in\mathbb{C}^{d}$ 
		be a finite sequence of independent random vectors. 
		Suppose that $\EE\left[g_k\right]=0$
		and 
		$\left\Vert g_k\right\Vert _{2}\leq B$ a.s.\  
		and put 
		$\sigma^{2}\geq\sum_{k}\mathbb{E}\left[\left\Vert g_k\right\Vert _{2}^{2}\right]$.
		Then for all $0\leq t\leq\sigma^{2}/B$:
		\begin{eqnarray*}
			\operatorname{Pr}\left(\left\Vert \sum_{k}g_k\right\Vert_2 \geq t\right)  
			&\leq&\exp\left( -\frac{t^{2}}{8\sigma^{2}}+\frac{1}{4}\right) .
		\end{eqnarray*}
	\end{proposition}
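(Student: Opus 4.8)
The plan is to write $Z:=\|\sum_k g_k\|_2$ and prove the claim in three steps: (i) an $O(\sigma)$ bound on $\EE[Z]$; (ii) a Bernstein-type (variance-sensitive) concentration of $Z$ about $\EE[Z]$; and (iii) a short computation converting (ii) into the stated one-sided tail, in which the hypothesis $t\le\sigma^2/B$ is used only to tame the ``linear'' correction term. I would argue directly rather than applying Proposition~\ref{prop:bernstein} to the diagonal matrices $\mathrm{diag}(g_k)$, precisely in order to avoid the factor $2d$ on the right-hand side of (\ref{eqn:bernstein}), which would otherwise cost a $\log d$ factor inside the square root and render the estimate useless for the intended application.

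For (i): since the $g_k$ are independent with $\EE[g_k]=0$, all off-diagonal terms of $\|\sum_k g_k\|_2^2=\sum_{j,k}\langle g_j,g_k\rangle$ have vanishing expectation, so $\EE[Z^2]=\sum_k\EE[\|g_k\|_2^2]\le\sigma^2$, and Jensen's inequality gives $\EE[Z]\le\sigma$. For (ii) I would exploit the variational description $Z=\sup_{\|v\|_2\le1}\operatorname{Re}\langle v,\sum_k g_k\rangle=\sup_{v}\sum_k\operatorname{Re}\langle v,g_k\rangle$, which exhibits $Z$ as a supremum of a sum of independent, centered scalar variables, each bounded by $B$ (Cauchy--Schwarz together with $\|g_k\|_2\le B$) and with per-summand second moment at most $\EE[\|g_k\|_2^2]$ \emph{uniformly in $v$}. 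Running the Doob martingale $Z_j:=\EE[Z\mid g_1,\dots,g_j]$, one checks that its increments are at most $2B$ in modulus and --- via conditional Jensen together with $\big|Z-\|\sum_{k\ne j}g_k\|_2\big|\le\|g_j\|_2$ --- that its predictable quadratic variation is almost surely at most $\sum_k\EE[\|g_k\|_2^2]\le\sigma^2$. Freedman's martingale Bernstein inequality (equivalently, an inline moment-generating-function computation telescoped over $j$, using the standard estimate $\log\EE\,e^{\lambda D}\le\lambda^2\operatorname{Var}D/(2(1-\lambda c/3))$ for a centered variable $|D|\le c$, which is the scalar building block of Proposition~\ref{prop:bernstein}) then yields
\begin{equation*}
	\operatorname{Pr}\big(Z\ge\EE[Z]+u\big)\le\exp\!\Big(-\frac{u^2/2}{\sigma^2+2Bu/3}\Big),\qquad u\ge0 .
\end{equation*}

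For (iii): if $t\le\sqrt2\,\sigma$, then $\exp(-t^2/(8\sigma^2)+\tfrac14)\ge1$ and there is nothing to prove, so assume $t>\sqrt2\,\sigma\ge\EE[Z]$. Then $\operatorname{Pr}(Z\ge t)\le\operatorname{Pr}\big(Z\ge\EE[Z]+(t-\sigma)\big)$, and since $t\le\sigma^2/B$ gives $2B(t-\sigma)/3\le2\sigma^2/3$, the exponent above is at most $-\,3(t-\sigma)^2/(10\sigma^2)$. It remains to observe the elementary inequality $3(t-\sigma)^2/(10\sigma^2)\ge t^2/(8\sigma^2)-\tfrac14$, which holds because the quadratic $7t^2-24\sigma t+22\sigma^2$ has negative discriminant and is therefore positive for every $t$; this completes the argument.

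The step I expect to be the real obstacle is (ii): securing a genuinely \emph{dimension-free} variance proxy equal to $\sigma^2$ for the fluctuations of $Z$. The bookkeeping with the Doob martingale is routine, but the crucial structural point --- that the conditional second moments of the martingale increments are controlled by $\EE[\|g_k\|_2^2]$ \emph{simultaneously for all supremizing directions $v$} --- is exactly what the exposition in \cite{ledoux_probability_1991} (Ch.~6.3) and the direct argument of \cite{gross_recovering_2011} are engineered to supply; everything downstream of it is elementary.
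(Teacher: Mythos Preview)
The paper does not give its own proof of this proposition: it is stated as a tool and attributed to \cite{ledoux_probability_1991} (Chapter~6.3) and \cite{gross_recovering_2011}, with no argument supplied in the text. Your sketch is essentially the argument those references contain --- bound $\EE[Z]\le\sigma$ by independence and Jensen, obtain a dimension-free Bernstein-type concentration of $Z$ about $\EE[Z]$ via the Doob martingale (using $\big|Z-\|\sum_{k\ne j}g_k\|_2\big|\le\|g_j\|_2$ to control both the increment bound and the predictable quadratic variation by $\sum_k\EE[\|g_k\|_2^2]$), and then absorb the shift by $\EE[Z]$ and the linear $Bu$ term using $t\le\sigma^2/B$ --- so there is nothing to compare against beyond noting that your route is precisely the one the paper outsources to its citations. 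Your algebra in step~(iii) checks out.
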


	\subsection{Fundamental estimates} \label{sub:fundamental_estimates}

	We adopt the structure and nomenclature of this section from \cite{candes_probabilistic_2011}.
	The following elementary bounds will be used repeatedly:
	\begin{align}
		|\langle a_k, z\rangle|^2    &\leq s \mu \|z\|_2^2,& 
		|\langle a_k, X z \rangle|^2 &\leq s \mu \|z\|_2^2, \\
		\|P_T a_k\|_2^2 &\leq \mu s,&	  
		\|P_T X a_k\|_2^2 &\leq \mu s.
	\end{align}
	Also, we will always assume that $m\geq s$.

	\begin{lemma}[Scaling]
		\label{lem:e0}
		Let $\tilde{a}$ be a random vector such that $\EE[\tilde{a}
		\tilde{a}^*]$ is invertible.

		There is a number $\nu$ such that, with $ a:=\nu \tilde{a}$, it
		holds that
		\begin{equation*}
			\kappa_s =
			\lambda_\mathrm{max}(s,\EE[a a^*]) =
			\lambda_\mathrm{min}(s,\EE[ a a^*])^{-1}
		\end{equation*}
		for all $1\leq s \leq n$.
		This resealed ensemble fulfills:
		\begin{equation}
			\kappa_s \mu	\geq	1.	
		\end{equation}
	\end{lemma}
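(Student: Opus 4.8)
The lemma bundles three assertions, which I would prove in turn: that $\kappa_s$ is invariant under the rescaling $\tilde a\mapsto\nu\tilde a$ (so that ``the condition number'' is genuinely a property of the ensemble, not of an arbitrary choice of scale); that one distinguished $\nu$ realises the normalisation $\lambda_{\max}(s,\EE[aa^*])=\lambda_{\min}(s,\EE[aa^*])^{-1}=\kappa_s$; and the a priori estimate $\kappa_s\mu\ge1$. The first is immediate: replacing $\tilde a$ by $\nu\tilde a$ multiplies $\EE[aa^*]$ by $|\nu|^2$, hence $\Sigma=\EE[aa^*]^{1/2}$ by $|\nu|$ and $\Sigma^{-1}$ by $|\nu|^{-1}$; since $\lambda_{\max}(s,\cdot)$ and $\lambda_{\min}(s,\cdot)$ are positively homogeneous of degree one, the quotients $\mathrm{cond}(s,\Sigma)$ and $\mathrm{cond}(s,\Sigma^{-1})$, and with them $\kappa_s$, are unaffected.

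For the distinguished scale, observe that for $\nu>0$ the map $\nu\mapsto\lambda_{\max}(\EE[aa^*])\,\lambda_{\min}(\EE[aa^*])=\nu^4\,\lambda_{\max}(\EE[\tilde a\tilde a^*])\,\lambda_{\min}(\EE[\tilde a\tilde a^*])$ is a continuous bijection onto $(0,\infty)$ — the prefactor is strictly positive exactly because $\EE[\tilde a\tilde a^*]$ is invertible — so there is a unique $\nu>0$, namely $\nu=\bigl(\lambda_{\max}(\EE[\tilde a\tilde a^*])\,\lambda_{\min}(\EE[\tilde a\tilde a^*])\bigr)^{-1/4}$, for which this product equals $1$; this is precisely condition~(\ref{eqn:scaling}). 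For that $\nu$ one reads off $\lambda_{\max}(\Sigma)=\sqrt\kappa$ and $\lambda_{\min}(\Sigma)=1/\sqrt\kappa$ with $\kappa=\kappa_n$, so that the claimed chain, at $s=n$, is exactly the identity $\kappa=\lambda_{\max}(\EE[aa^*])=\lambda_{\min}(\EE[aa^*])^{-1}$ just established. For $1\le s<n$ I would obtain the analogous chain from the same rescaling, keeping track of the different sparsity levels by means of the monotonicity of $\lambda_{\max}(s,\cdot)$ and $\lambda_{\min}(s,\cdot)$ in $s$ together with the reciprocity estimates $\lambda_{\min}(s,\Sigma^{-1})\ge\lambda_{\max}(s,\Sigma)^{-1}$ and $\lambda_{\max}(s,\Sigma^{-1})\ge\lambda_{\min}(s,\Sigma)^{-1}$, each a one-line Cauchy--Schwarz estimate since $\|v\|_2^2=\langle\Sigma^{-1}v,\Sigma v\rangle\le\|\Sigma^{-1}v\|_2\,\|\Sigma v\|_2$.

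The quantitatively substantive claim is $\kappa_s\mu\ge1$, and here the normalisation plays no role; instead one pits the two incoherence conditions against each other. The first one gives $\mu\ge\EE|\langle a,e_i\rangle|^2=e_i^*\Sigma^2e_i=\|\Sigma e_i\|_2^2\ge\lambda_{\min}(s,\Sigma)^2$, the last inequality because $e_i$ is $1$-sparse, hence a competitor in the minimum defining $\lambda_{\min}(s,\Sigma)$. The second one, applied to $Xe_i$ with $X=\Sigma^{-2}$, gives $\mu\ge\EE|\langle a,Xe_i\rangle|^2=e_i^*\Sigma^{-2}e_i=\|\Sigma^{-1}e_i\|_2^2\ge\lambda_{\min}(s,\Sigma^{-1})^2\ge\lambda_{\max}(s,\Sigma)^{-2}$, using the reciprocity estimate above. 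Multiplying the two inequalities, $\mu^2\ge\lambda_{\min}(s,\Sigma)^2\,\lambda_{\max}(s,\Sigma)^{-2}=\mathrm{cond}(s,\Sigma)^{-2}$, and therefore $\kappa_s\mu\ge\mathrm{cond}(s,\Sigma)\,\mu\ge1$. The main obstacle is the step buried in the previous paragraph: a single scaling constant must make the normalisation identity hold simultaneously at every sparsity level $s$, so one has to establish that the $s$-sparse singular values of $\Sigma$ and of $\Sigma^{-1}$ are tied together tightly enough — this is where the monotonicity and the Cauchy--Schwarz reciprocity carry the argument. By contrast $\kappa_s\mu\ge1$ is easy, provided one remembers to use \emph{both} halves of the incoherence hypothesis: the first alone only delivers $\mu\ge\lambda_{\min}(s,\Sigma)^2$, which does not by itself force $\kappa_s\mu\ge1$.
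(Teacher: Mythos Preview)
Your treatment of $\kappa_s\mu\ge1$ is correct but takes a genuinely different route from the paper. The paper uses only the \emph{first} incoherence condition: from $\mu\ge\max_i|\langle a,e_i\rangle|^2$ a.s.\ one passes to expectations and obtains $\mu\ge\max_i e_i^*\,\EE[aa^*]\,e_i\ge\lambda_{\min}(s,\EE[aa^*])$; combined with the normalisation $\kappa_s=\lambda_{\min}(s,\EE[aa^*])^{-1}$ this gives $\kappa_s\mu\ge1$ directly. So your closing remark---that the first incoherence condition alone ``does not by itself force $\kappa_s\mu\ge1$''---is off: it does, once the normalisation is used. Your argument instead multiplies the two incoherence bounds to get the scale-invariant inequality $\mu\,\mathrm{cond}(s,\Sigma)\ge1$, which is a clean alternative that dispenses with the normalisation but does need the second half of~(\ref{eqn:incoherence2}). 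Each approach buys something: the paper's shows that the second incoherence bound is not needed here, while yours shows that the inequality $\kappa_s\mu\ge1$ is intrinsic and not an artefact of the chosen scale.

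On the normalisation itself there is a gap. The paper is much more direct than you: it simply sets $\nu=\bigl(\lambda_{\max}(s,\EE[\tilde a\tilde a^*])\,\lambda_{\min}(s,\EE[\tilde a\tilde a^*])\bigr)^{-1/4}$, using the $s$-sparse eigenvalues themselves, and verifies the chain at that one sparsity level; the ``for all $1\le s\le n$'' is to be read as ``for each $s$ separately, with $\nu$ allowed to depend on $s$''. Your attempt to make a \emph{single} $\nu$ (chosen at $s=n$) enforce $\lambda_{\max}(s,\EE[aa^*])=\lambda_{\min}(s,\EE[aa^*])^{-1}$ simultaneously for every $s$ asks for something strictly stronger, and the monotonicity-plus-reciprocity sketch cannot close it: those tools give inequalities, not the equalities the chain demands. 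Indeed the product $\lambda_{\max}(s,\EE[aa^*])\,\lambda_{\min}(s,\EE[aa^*])$ genuinely depends on $s$ in general---for $\EE[aa^*]=\bigl(\begin{smallmatrix}2&1\\1&2\end{smallmatrix}\bigr)$ it equals $3$ at $s=2$ but $5$ at $s=1$---so no single rescaling can make it equal to $1$ at every $s$.
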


	\begin{proof}
		The first assertion follows immediately for
		\begin{equation*}
			\nu = \big( 
				\lambda_\mathrm{max}(s, \EE[\tilde{a} \tilde{a}^*]) 
				\lambda_\mathrm{min}(s, \EE[\tilde{a} \tilde{a}^*])
			\big)^{-\frac14}.
		\end{equation*}
		For the second claim: By definition 
		$\mu \geq \max_i |\langle a,e_i \rangle|^2$
		holds almost surely, so that in particular
		\begin{eqnarray*}
			\mu 
			&\geq& \EE \left[ \max_{i} |\langle a,e_i \rangle|^2 \right].
		\end{eqnarray*}
		For every $i$, the function
		\begin{equation*}
			a \mapsto |\langle a, e_i\rangle|^2
		\end{equation*}
		is convex, which implies that
		\begin{equation*}
			a \mapsto\max_{i} |\langle a,e_i \rangle|^2
			= \max_i e_i^* (a a^*) e_i
		\end{equation*}
		is convex (as the pointwise maximum of convex functions). Hence, by
		Jensen's inequality, 
		\begin{eqnarray*}
			\EE \left[ \max_{i} |\langle a,e_i \rangle|^2 \right]
			&\geq& \max_{i} e_i^* \EE[aa^*] e_i	
				= \max_i \langle e_i, \EE[aa^*] e_i \rangle	\\
				&\geq&	\lambda_\mathrm{min}\left(1,\EE[aa^*]\right)	
				\geq	\lambda_\mathrm{min}\left(s,\EE[aa^*]\right).
		\end{eqnarray*}
		Therefore $\mu \geq \lambda_\mathrm{min}\left(s,\EE[aa^*]\right)$.
		Together with $\kappa_s =
		\lambda_\mathrm{min}^{-1}\left(s,\EE[aa^*]\right)$, this implies $\mu
		\kappa_s \geq 1$.	    	
	\end{proof}

	The estimates in this proof are tight in the sense that there are ensembles
	for which each inequality above turns into an equality. A
	straightforward example for such an ensemble is given by picking
	super-normalized Fourier basis vectors $f_k$ (with coefficients $(f_k)_l = e^{2\pi \mathrm{i} \frac{k l}{n}}$)
	according to the uniform probability distribution.

	\begin{lemma}[Local isometry] \label{lem:e1}
		Let $T$ and $P_T$ be as in the notation section. Then for each $0\leq \tau\leq \frac12$:
		\begin{equation*}
			\operatorname{Pr} \left( \|P_T\left( XA^*A-\Id\right)P_T\|_{\infty} \geq \tau \right) \\
			\leq 2s \exp \left(-\frac{m}{s\mu \kappa_s} \frac{\tau^2}{2\left(1+2\tau/3\right)}\right)
		\end{equation*}
	\end{lemma}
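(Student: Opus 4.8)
\emph{Plan.} The natural route is to realise $P_T(XA^*A-\Id)P_T$ as a sum of independent, mean-zero random matrices living on the at most $s$-dimensional space $\operatorname{ran}P_T$, and to apply the matrix Bernstein inequality, Proposition~\ref{prop:bernstein}. Since $A^*A=\tfrac1m\sum_{k=1}^m a_ka_k^*$ and $X=(\EE[aa^*])^{-1}$, one has $\EE[Xa_ka_k^*]=X\EE[aa^*]=\Id$, hence
\[
	P_T\bigl(XA^*A-\Id\bigr)P_T=\sum_{k=1}^m M_k,\qquad M_k:=\tfrac1m\bigl(P_TXa_ka_k^*P_T-P_T\bigr),
\]
with the $M_k$ i.i.d.\ and $\EE[M_k]=0$. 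Note that $M_k$ is \emph{not} Hermitian (because $X$ and $a_ka_k^*$ need not commute), so one genuinely uses the general form of Proposition~\ref{prop:bernstein}, whose $\sigma^2$ involves both $\EE[M_kM_k^*]$ and $\EE[M_k^*M_k]$. It remains to produce the constants $B$ and $\sigma^2$.

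For $B$: writing $P_TXa_ka_k^*P_T=(P_TXa_k)(P_Ta_k)^*$, the operator norm of this rank-one matrix is $\|P_TXa_k\|_2\,\|P_Ta_k\|_2\le\mu s$ by the fundamental estimates, so $\|M_k\|_\infty\le\tfrac1m(\mu s+1)\le\tfrac{2\mu s\kappa_s}{m}=:B$, using $\kappa_s\ge1$ and $\mu s\kappa_s\ge1$ (Lemma~\ref{lem:e0}). For $\sigma^2$: expanding the products and using $\EE[P_TXa_ka_k^*P_T]=\EE[P_Ta_ka_k^*XP_T]=P_T$, the cross terms cancel and
\[
	\EE[M_kM_k^*]=\tfrac1{m^2}\!\left(\EE\bigl[\|P_Ta_k\|_2^2\,(P_TXa_k)(P_TXa_k)^*\bigr]-P_T\right),\quad
	\EE[M_k^*M_k]=\tfrac1{m^2}\!\left(\EE\bigl[\|P_TXa_k\|_2^2\,(P_Ta_k)(P_Ta_k)^*\bigr]-P_T\right).
\]
For the second block one bounds the scalar $\|P_TXa_k\|_2^2\le\mu s$, giving $\EE[M_k^*M_k]\preceq\tfrac{\mu s}{m^2}P_T\EE[aa^*]P_T$; and $\|P_T\EE[aa^*]P_T\|_\infty=\max_{\|v\|_0\le s}\langle v,\EE[aa^*]v\rangle/\|v\|_2^2\le\lambda_{\mathrm{max}}(s,\EE[aa^*])=\kappa_s$ by Cauchy--Schwarz and Lemma~\ref{lem:e0}, whence $\|\sum_k\EE[M_k^*M_k]\|_\infty\le\tfrac{\mu s\kappa_s}{m}$.

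The first block is the delicate one: the analogous manoeuvre only gives $\|\sum_k\EE[M_kM_k^*]\|_\infty\le\tfrac{\mu s}{m}\|X_T\|_\infty$, and $\|X_T\|_\infty=\|P_TXP_T\|_\infty$ is \emph{not} bounded by $\kappa_s$ in general (by a Schur-complement argument it is at best of order $1/\lambda_{\mathrm{min}}(\EE[aa^*])$, i.e.\ $\kappa_n$, not $\kappa_s$). This is precisely where the \emph{second} incoherence condition in \eqref{eqn:incoherence2} enters: it forces $\langle e_i,Xe_i\rangle$ --- the second moment of $\langle a,Xe_i\rangle$ --- to be $\le\mu$, which one exploits to control the relevant fourth-moment expectation and recover $\|\sum_k\EE[M_kM_k^*]\|_\infty\le\tfrac{\mu s\kappa_s}{m}$. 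Thus $\sigma^2=\tfrac{\mu s\kappa_s}{m}$ is admissible.

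Feeding $d=|T|\le s$, $t=\tau$, and these $B,\sigma^2$ into Proposition~\ref{prop:bernstein} yields
\[
	\operatorname{Pr}\bigl(\|P_T(XA^*A-\Id)P_T\|_\infty\ge\tau\bigr)\le 2s\exp\!\left(-\frac{\tau^2/2}{\tfrac{\mu s\kappa_s}{m}+\tfrac{2\mu s\kappa_s}{m}\cdot\tfrac{\tau}{3}}\right)=2s\exp\!\left(-\frac{m}{s\mu\kappa_s}\frac{\tau^2}{2(1+2\tau/3)}\right),
\]
which is the assertion (the hypothesis $\tau\le\tfrac12$ plays no role in this step, it merely fixes the regime of interest). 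The main obstacle is thus the variance estimate for the $\EE[M_kM_k^*]$ block: the naive factorisation through $\|X_T\|_\infty$ loses an extra factor of order $s$ (or of $\kappa_n/\kappa_s$), and squeezing it back to a single factor of $\kappa_s$ with linear dependence on $s$ is exactly what forces the use of the second incoherence hypothesis; dropping that hypothesis leaves only the weaker $s^2$-type estimate behind \eqref{eqn:2less1}.
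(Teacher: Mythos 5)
Your overall strategy --- write $P_T(XA^*A-\Id)P_T$ as a sum of i.i.d.\ mean-zero non-Hermitian matrices, bound $B$ by $2\mu s\kappa_s/m$ and both variance blocks by $\mu s\kappa_s/m$, and apply Proposition~\ref{prop:bernstein} with $d=s$ --- is exactly the route the paper takes, and your bookkeeping of $B$, $\sigma^2$, and the final plug-in that produces $2s\exp\bigl(-\tfrac{m}{s\mu\kappa_s}\tfrac{\tau^2}{2(1+2\tau/3)}\bigr)$ is correct.

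Where your account goes wrong is in the narrative about the second incoherence condition, and the ``delicate'' block. You locate the essential use of \eqref{eqn:incoherence2} in the $\EE[M_kM_k^*]$ block, claiming the second incoherence somehow rescues $\|X_T\|_\infty\le\kappa_s$; but the only thing $|\langle a,Xe_i\rangle|^2\le\mu$ buys you is $\langle e_i,Xe_i\rangle\le\mu$, hence the trace bound $\|X_T\|_\infty\le\tr X_T\le s\mu$. Feeding this into $\tfrac{\mu s}{m}\|X_T\|_\infty$ yields $\tfrac{(\mu s)^2}{m}$, which is precisely the weak $s^2$-type estimate of \eqref{eqn:2less1}, \emph{not} the claimed $\tfrac{\mu s\kappa_s}{m}$. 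So as written, the variance bound for the block you yourself single out as the crux is not established, and the stated conclusion does not follow. The paper instead dispatches this block uncontroversially via $\|X_T\|_\infty\le\lambda_{\max}(s,X)\le\kappa_s$, which follows from the normalization convention of Lemma~\ref{lem:e0} (i.e.\ it is a deterministic property of $\kappa_s$, not a consequence of incoherence); you are right that one must be careful that this is a genuinely $s$-sparse quantity, which is exactly why $\kappa_s$ is defined over $\Sigma$ \emph{and} $\Sigma^{-1}$, but no incoherence enters here. The actual ``essential'' use of the second incoherence, as the paper flags explicitly at display~\eqref{eqn:incoherence2used}, sits in the \emph{other} block $\EE[M_k^*M_k]$, where the scalar $\langle a_k,XP_TXa_k\rangle=\|P_TXa_k\|_2^2$ is bounded by $s\mu$ --- a bound you invoke from the ``fundamental estimates'' without noticing it is precisely a restatement of \eqref{eqn:incoherence2}. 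Dropping that hypothesis is what forces the $s^2$ or the commutator-$K$ workaround in Proposition~\ref{prop:incoherence2less}, not anything in the $\EE[M_kM_k^*]$ block.
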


	\begin{proof}
		Let us decompose the relevant expression:
		\begin{equation*}
			P_T\left(XA^*A-\Id\right)P_T = \frac{1}{m}\sum_{i=1}^m M_k,
		\end{equation*}
		where $M_k := P_T\left(Xa_ka_k^*-\Id\right)P_T$. Note that $\EE[M_k]=0$. \newline
		We aim to apply the Matrix Bernstein inequality. To this end, we
		estimate
		\begin{eqnarray*}
			\|M_k\|_{\infty} 
			&\leq& \|P_TXa_ka_k^*P_T\|_{2}+1  \\
			&=& \|P_TXa_k\|_2 \, \|a_k^*P_T\|_2+1  \\
			&\leq& \mu s +1 \leq 2 \mu s \kappa_s =:B.
		\end{eqnarray*}	
		Furthermore:
	\begin{eqnarray*}
			 & & \left\|\mathbb{E}\left[M_{k}M_{k}^{*}\right]\right\|_{\infty}  	\\
			&=& 
			\left\Vert \mathbb{E}\left[\left(P_{T}\left(Xa_{k}a_{k}^{*}-\Id\right)
			P_{T}\right)\left(P_{T}\left(a_{k}a_{k}^{*}X-\Id\right)P_{T}\right)\right]\right\Vert_{\infty}\\
			&=& 
			\Big\| \mathbb{E}\left[P_{T}Xa_{k}a_{k}^{*}P_{T}a_{k}a_{k}^{*}XP_{T}\right]
			-\mathbb{E}\left[P_{T}Xa_{k}a_{k}^{*}P_{T}\right]  	
			-\mathbb{E}\left[P_{T}a_{k}a_{k}^{*}XP_{T}\right]+P_{T}\Big\|
			_{\infty} \\
			&=& 
			\left\|\mathbb{E}\left[P_{T}\left(Xa_{k}\left\langle a_{k},P_{T}a_{k}\right\rangle a_{k}^{*}X
			-\Id\right)P_{T}\right]\right\Vert_{\infty} \\
			&\leq& 
			\max\left(\left\Vert \mu s\,\mathbb{E}\left[P_{T}Xa_{k}a_{k}^{*}XP_{T}\right]\right\Vert_{\infty},1\right)\\
			&\leq& \max\left(\mu s\left\Vert X_{T}\right\Vert_{\infty},1\right)
			\leq \max\left(\mu s\kappa_s,1\right) = \mu s
			\kappa_s.
		\end{eqnarray*}

Similarly,
		\begin{eqnarray}
			\left\| \mathbb{E}\left[M_{k}^{*}M_{k}\right]\right\|_{\infty}
			\nonumber 
			&=& 
			\left\| 
				\mathbb{E}\left[P_{T}\left(a_{k}\left\langle a_{k},XP_{T}Xa_{k}\right\rangle a_{k}^{*}
				-I\right)P_{T}\right]
			\right\|_{\infty}\nonumber \\
			 &\leq& 
			\max\left(
				\left\| s\mu\mathbb{E}\left[P_{T}a_{k}a_{k}^{*}P_{T}\right]\right\|_{\infty}
				,1
			\right) \label{eqn:incoherence2used}\\
			 &\leq& 
			 \max\left(
				s\mu\left\| P_{T}X^{-1}P_{T}\right\Vert _{\infty},1
			\right)
			 \leq 
				\mu s\kappa_s \nonumber.
		\end{eqnarray}
		Thus:
		\begin{equation*}
			\max\left\{\|\sum_{k=1}^m\EE\left(M_kM_k^*\right)\|_{\infty},
			\|\sum_{k=1}^m\EE\left(M_k^*M_k\right)\|_{\infty}\right\} 
			\leq ms\mu\kappa_s
			=: \sigma^2.
		\end{equation*}
		Applying the Matrix Bernstein inequality for $s$-dimensional matrices 
		($P_T\left(XA^*A-\Id\right)P_T$ has rank at most $s$)
		with $t=m\tau$ yields the desired result.
	\end{proof}

	The estimate (\ref{eqn:incoherence2used}) is the only place
	in the proof where the second incoherence property in
	(\ref{eqn:incoherence2}) is essentially used. A careful analysis shows
	that in all other cases, one can do without it, possibly at the price
	of replacing $\kappa_s$ by $\kappa$ (which is the reason why we have
	not spelled it out). In order to obtain the results of
	Proposition~\ref{prop:incoherence2less}, the bound
	(\ref{eqn:incoherence2used}) has to be modified. To arrive at
(\ref{eqn:2less1}), use
\begin{eqnarray*}
		\left\| \mathbb{E}\left[M_{k}^{*}M_{k}\right]\right\|_{\infty} 
		&\leq&
 		\mathbb{E} \left[ \left\|\left[M_{k}^{*}M_{k}\right]\right\|_{\infty} \right]\\
		&\leq&
		\EE \left[ \| P_T a_k a_k^* P_T \, \langle a_k, X P_T X a_k\rangle \|_\infty \right] \\
		&\leq&
		s \mu\,\EE  \left[ \langle a_k, X P_T X a_k\rangle \right]
		=
		s \mu\,\EE \left[ \tr \left(a_k a_k^* X P_T X\right) \right]  \\ 
		&=&
		s \mu  \tr \left(X^{-1} X P_T X\right)  
		= 
		s \mu  \tr  \left(P_T X\right)  
		\leq
		s^2 \mu \kappa_s.
\end{eqnarray*}
And for (\ref{eqn:2less2}):
\begin{eqnarray*}
		& & \left\| 
		 	\mathbb{E}\left[P_{T}a_{k}a_{k}^*XP_{T}Xa_{k}a_{k}^{*} P_{T}\right]
		\right\|_{\infty}\nonumber 	\\
		&=&
		\big\| 
		 	\mathbb{E}\left[P_{T}Xa_{k}a_{k}^*P_{T}Xa_{k}a_{k}^{*} P_{T}\right]
		+\mathbb{E}\left[P_{T}[a_{k}a_{k}^*,X]P_{T}Xa_{k}a_{k}^{*} P_{T}\right]
		\big\|_{\infty}\nonumber \\
		&\leq&
		\left\| 
		 	\mathbb{E}\left[P_{T}Xa_{k}a_{k}^*P_{T}a_{k}a_{k}^{*}X P_{T}\right]
		\right\|_{\infty} 
		+2\left\|\mathbb{E}\left[P_{T}[a_{k}a_{k}^*,X]P_{T}Xa_{k}a_{k}^{*} P_{T}\right]
		\right\|_{\infty}\nonumber \\
		&\leq&
		\mu s \kappa_s
		+K \,
		\| \EE\left[ P_T X a_k a_k^* P_T \right] \|_\infty \\
		&=&
		\mu s \kappa_s
		+K \,
		\| P_T X X^{-1} P_T \|_\infty \\
		&=&
		\mu s \kappa_s +K.
\end{eqnarray*}

\begin{lemma}[Low-distortion] \label{lem:e2}
	Let $z, T, P_T$ be as in the notation section.  For each $0\leq \tau\leq 1$ it
	holds that
	\begin{equation*}
		\operatorname{Pr}\big(
			\left\Vert P_{T}\left(\Id-A^{*}AX\right)z\right\Vert _{2}
			\geq \tau \left\Vert z\right\Vert _{2}
		\big)  
		\leq
		\exp\left( -\frac{m\tau^{2}}{16s\mu\kappa_s}+\frac{1}{4}\right).
	\end{equation*}
\end{lemma}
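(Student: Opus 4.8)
The plan is to write the random quantity $P_T(\Id-A^*AX)z$ as a normalized sum of i.i.d.\ centered random vectors and then invoke the Vector Bernstein inequality. Since $z=P_Tz$ and $\EE[aa^*]X=\Id$, one has
\begin{equation*}
	P_T(\Id-A^*AX)z=\frac1m\sum_{i=1}^m g_i,\qquad g_i:=P_Tz-\langle a_i,Xz\rangle\,P_Ta_i,
\end{equation*}
and $\EE[g_i]=P_Tz-P_T\EE[aa^*]Xz=0$. So it only remains to supply the two inputs required by the Vector Bernstein inequality: an a.s.\ bound $B$ on $\|g_i\|_2$ and a bound $\sigma^2$ on $\sum_i\EE\|g_i\|_2^2$.

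For the uniform bound I would use the elementary estimates from Section~\ref{sub:fundamental_estimates}: $|\langle a_i,Xz\rangle|^2\le s\mu\|z\|_2^2$ and $\|P_Ta_i\|_2^2\le\mu s$ give
\begin{equation*}
	\|g_i\|_2\le\|z\|_2+|\langle a_i,Xz\rangle|\,\|P_Ta_i\|_2\le(1+s\mu)\|z\|_2\le 2s\mu\kappa_s\|z\|_2=:B,
\end{equation*}
where the last step uses $s\mu\kappa_s\ge 1$ (from Lemma~\ref{lem:e0}, together with $\kappa_s\ge 1$). For the variance, write $Y_i:=\langle a_i,Xz\rangle P_Ta_i$, so that $g_i=\EE[Y_i]-Y_i$ and hence $\EE\|g_i\|_2^2=\EE\|Y_i\|_2^2-\|\EE Y_i\|_2^2\le\EE\|Y_i\|_2^2$. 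Then $\EE\|Y_i\|_2^2\le\mu s\,\EE|\langle a_i,Xz\rangle|^2=\mu s\,\langle z,X\EE[aa^*]Xz\rangle=\mu s\,\langle z,Xz\rangle=\mu s\,\langle z,X_Tz\rangle\le\mu s\,\|X_T\|_\infty\|z\|_2^2\le\mu s\kappa_s\|z\|_2^2$, the final inequality being the bound $\|X_T\|_\infty\le\kappa_s$ already used in the proof of Lemma~\ref{lem:e1}. Therefore $\sum_{i=1}^m\EE\|g_i\|_2^2\le ms\mu\kappa_s\|z\|_2^2$.

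The one point that needs care is the range of validity. The Vector Bernstein inequality holds only for $t\le\sigma^2/B$, and the naive choice $\sigma^2=ms\mu\kappa_s\|z\|_2^2$ would cover only $\tau\le\tfrac12$, whereas the lemma claims $\tau\le 1$. This is resolved by exploiting the freedom to over-estimate the variance: take $\sigma^2:=2ms\mu\kappa_s\|z\|_2^2$, which still dominates $\sum_i\EE\|g_i\|_2^2$ and now gives $\sigma^2/B=m\|z\|_2$. Applying the Vector Bernstein inequality to $\sum_i g_i=m\,P_T(\Id-A^*AX)z$ with $t=m\tau\|z\|_2\le m\|z\|_2=\sigma^2/B$ yields $\Pr(\cdots)\le\exp(-t^2/(8\sigma^2)+\tfrac14)=\exp(-m\tau^2/(16s\mu\kappa_s)+\tfrac14)$, exactly the asserted bound. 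The main obstacle is thus not any single delicate estimate but assembling the pieces so that the constant $16$ and the admissible range $\tau\le 1$ both come out correctly; the variance decomposition $\EE\|g_i\|_2^2\le\EE\|Y_i\|_2^2$ and the deliberate factor-of-two slack in $\sigma^2$ are the two ingredients that make this happen.
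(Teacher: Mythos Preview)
Your proof is correct and follows essentially the same route as the paper: decompose $P_T(\Id-A^*AX)z$ as an average of the centered i.i.d.\ vectors $g_i=P_T(\Id-a_ia_i^*X)z$, bound $B=2s\mu\kappa_s\|z\|_2$ and $\sigma^2=2ms\mu\kappa_s\|z\|_2^2$, and apply the Vector Bernstein inequality with $t=m\tau\|z\|_2$. The only cosmetic difference is that the paper obtains the factor $2$ in $\sigma^2$ directly via the cruder estimate $\EE\|g_k\|_2^2\le\EE\|Y_k\|_2^2+\|z\|_2^2$, whereas you first get the sharper bound $\EE\|g_i\|_2^2\le\EE\|Y_i\|_2^2$ through the variance identity and then deliberately inflate $\sigma^2$ to secure the full range $0\le\tau\le1$.
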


\begin{proof}
	The structure of the proof closely follows the one of
	Lemma \ref{lem:e1}.
	Set
	\begin{equation*}
		g_{k}  :=  P_{T}\left(\Id-a_{k}a_{k}^{*}X\right)z.
	\end{equation*}

	We bound
	\begin{eqnarray*}
		\|g_k\|_2 
		&=& \|P_T(\Id - a_k a_k^* X)z\|_2 \\
		&\leq& \|z\|_2 + \|P_T a_k \langle a_k, X z \rangle \|_2 \\
		&\leq& \|z\|_2 + s \mu \| z\|_2  \leq 2 s \mu \kappa_s \|z\|_2 =: B
	\end{eqnarray*}
	and
	\begin{eqnarray*}
		\EE[\|g_k\|_2^2]
		&\leq& \EE[\|P_T a_k \langle a_k, X z\rangle \|_2^2 ]+\|z\|_2^2 \\
		&=& \EE\big[\|P_T a_k\|_2^2 |\langle a_k, Xz\rangle|^2\big]
			+\|z\|_2^2 \\
		&\leq& s\mu \EE\big[\langle Xz, a_k \rangle\langle a_k, Xz\rangle\big] + \|z\|_2^2 \\
		&=& s\mu \langle X z, \EE[a_k a_k^*] Xz\rangle\big] + \|z\|_2^2\\
		&=& s\mu \langle X z, z\rangle + \|z\|_2^2
		\leq 2s\mu \kappa_s \|z\|_2^2
	\end{eqnarray*}
	so that
	\begin{equation*}
		\sum_{k=1}^m  \EE[\|g_k\|_2^2] 
		\leq  2m s\mu \kappa_s \|z\|_2^2 =: \sigma^2
	\end{equation*}
	and thus $\frac{\sigma^2}{B} = m \|z\|_2$.
	The advertised statement follows by applying the vector Bernstein
	inequality for $t= m\tau$.
\end{proof}

\begin{lemma}[Off-support incoherence] \label{lem:e3}
	Let $z, P_{T^c}$ again be as in the notation section.  Then for each
	$\tau\geq0$: 
	\begin{equation*}
			\operatorname{Pr}\big(
			\left\Vert P_{T^c}A^{*}AX z\right\Vert_\infty
			\geq \tau \left\Vert z\right\Vert _{2}
		\big) 
		\leq 
			2n\exp \left(-\frac{3m\tau^2}{2\mu \kappa_s(3+\sqrt{s}\tau)}\right)
	\end{equation*}
\end{lemma}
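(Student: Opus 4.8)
The plan is to follow the template of Lemmas~\ref{lem:e1} and \ref{lem:e2}: expand the quantity of interest as a normalized sum of i.i.d.\ mean-zero terms, apply a Bernstein-type inequality coordinate-by-coordinate, and finish with a union bound over the (at most $n$) coordinates on which $P_{T^c}$ is supported.

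Concretely, write
\begin{equation*}
	P_{T^c}A^*AXz = \frac{1}{m}\sum_{k=1}^m P_{T^c}\, a_k\langle a_k, Xz\rangle .
\end{equation*}
Since $\EE[a_ka_k^*]=X^{-1}$, each summand has expectation $P_{T^c}z$, which vanishes because $z$ is supported on $T$. Hence $\|P_{T^c}A^*AXz\|_\infty=\max_{j\in T^c}\big|\tfrac1m\sum_k w_k^{(j)}\big|$, where $w_k^{(j)}:=\langle e_j,a_k\rangle\langle a_k,Xz\rangle$ are i.i.d.\ mean-zero scalars and $|T^c|\le n$. Fixing $j\in T^c$ and regarding $w_k^{(j)}$ as a $1\times1$ random matrix, I would apply Proposition~\ref{prop:bernstein} with $d=1$.

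The two inputs to Bernstein are obtained from the incoherence property and the fundamental estimates of Section~\ref{sub:fundamental_estimates}. For the a.s.\ bound, $|w_k^{(j)}|\le|\langle e_j,a_k\rangle|\,|\langle a_k,Xz\rangle|\le\sqrt{\mu}\cdot\sqrt{s\mu}\,\|z\|_2=\sqrt{s}\,\mu\|z\|_2=:B$. For the variance, pulling out the a.s.\ bound $|\langle e_j,a_k\rangle|^2\le\mu$ gives
\begin{equation*}
	\EE\big[|w_k^{(j)}|^2\big]\le\mu\,\EE\big[|\langle a_k,Xz\rangle|^2\big]
	=\mu\,\langle Xz,z\rangle=\mu\,\langle X_Tz,z\rangle\le\mu\,\|X_T\|_\infty\|z\|_2^2\le\mu\,\kappa_s\|z\|_2^2,
\end{equation*}
using $z=P_Tz$ and the bound $\|X_T\|_\infty\le\kappa_s$ already exploited in the proof of Lemma~\ref{lem:e1}; hence $\sigma^2:=\sum_{k=1}^m\EE[|w_k^{(j)}|^2]\le m\mu\kappa_s\|z\|_2^2$. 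Applying Proposition~\ref{prop:bernstein} with $t=m\tau\|z\|_2$ yields
\begin{equation*}
	\operatorname{Pr}\Big(\big|\textstyle\sum_k w_k^{(j)}\big|\ge m\tau\|z\|_2\Big)
	\le 2\exp\!\left(-\frac{m\tau^2/2}{\mu(\kappa_s+\sqrt{s}\,\tau/3)}\right)
	=2\exp\!\left(-\frac{3m\tau^2}{2\mu(3\kappa_s+\sqrt{s}\,\tau)}\right).
\end{equation*}
Since a condition number satisfies $\kappa_s\ge1$, we have $3\kappa_s+\sqrt{s}\,\tau\le\kappa_s(3+\sqrt{s}\,\tau)$, so the exponent is bounded above by $-\tfrac{3m\tau^2}{2\mu\kappa_s(3+\sqrt{s}\,\tau)}$. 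A union bound over the at most $n$ indices $j\in T^c$ produces the factor $n$ and gives the claimed estimate.

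I do not expect a genuine obstacle here; the proof is routine once the decomposition is in place. The two places needing minor care are: extracting the variance by pulling out $|\langle e_j,a_k\rangle|^2\le\mu$ rather than the bound on $|\langle a_k,Xz\rangle|^2$ (the latter choice would introduce a spurious factor $s$ into $\sigma^2$, and with it a $\sqrt{s}$ weakening of the final bound); and the cosmetic passage from $3\kappa_s+\sqrt{s}\,\tau$ to $\kappa_s(3+\sqrt{s}\,\tau)$, which is exactly where $\kappa_s\ge1$ enters. As with Lemma~\ref{lem:e1}, if one wishes to avoid the second incoherence condition one can replace $\kappa_s$ by $\kappa$ in the variance step.
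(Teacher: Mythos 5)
Your proposal is correct and follows the same route as the paper: fix $j\in T^c$, write $\langle e_j, A^*AXz\rangle$ as an average of the i.i.d.\ mean-zero scalars $\langle e_j,a_k\rangle\langle a_k,Xz\rangle$, apply the matrix Bernstein inequality with $d=1$, and union bound over $T^c$. The only (cosmetic) divergence is in the a.s.\ bound $B$: the paper takes the looser $B=\sqrt{s}\,\mu\kappa_s\|z\|_2$ so that the stated exponent drops out directly, whereas you take the sharper $B=\sqrt{s}\,\mu\|z\|_2$ and then invoke $\kappa_s\ge 1$ at the end to pass from $3\kappa_s+\sqrt{s}\tau$ to $\kappa_s(3+\sqrt{s}\tau)$ — same inequality, applied one step later.
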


\begin{proof}
	Fix $i \in T^{c}$ and use the following decomposition:
	\begin{equation*}
		\langle e_i, A^{*}AXz\rangle
		= \frac{1}{m} \sum_{i=1}^m M_k, 
	\end{equation*}
	where $M_k := \langle e_i, a_k a_{k}^{*} Xz \rangle = \langle e_i,a_k \rangle \langle a_k,Xz \rangle$. 
	Note that we have:
	\begin{equation*}
		\EE [M_k] = \langle e_i ,\EE[a_k a_k^{*}]Xz\rangle = \langle e_i,z\rangle = 0, 
	\end{equation*}
	because $e_i \in T^c$. 
	Bound
	\begin{equation*}
	|M_k| = |\langle e_i,a_k \rangle \langle a_k,Xz\rangle| \leq \sqrt{s}\mu \kappa_s \|z\|_2=:B,
	\end{equation*}
	and
	\begin{eqnarray*}
		\EE[M_k M_k^*] 
		&=& \EE[M_k^* M_k] = \EE[|\langle a_k,e_i\rangle |^2 |\langle a_k,Xz\rangle|^2] \\
		&\leq& \mu \EE[\langle Xz,a_k a_k^* Xz\rangle] = \mu \langle Xz,z\rangle  \\	
		&\leq& \mu \|X_T\|_{\infty} \|z\|_2^2 \leq \mu \kappa_s\|z\|_2^2. \\
	\end{eqnarray*} 
	Therefore we can set $\sigma^2 := m \mu  \kappa_s \|z\|_2^2$. 
	Applying the Matrix Bernstein inequality for $d=1$ and the union bound
	over all $i\in T^c$ yields the claim.
\end{proof}

\begin{lemma}[Uniform off-support incoherence] \label{lem:e4}
	Let $T^c, P_T$ be as in the notation section.  For $0\leq\tau\leq1$ we have
	\begin{equation*}
		\operatorname{Pr} \left( \max_{i\in T^c} \|P_TXA^*Ae_i\|_2 \geq \tau \right)
		\leq	n\exp \left(-\frac{m\tau^2}{8s\mu\kappa_s}+\frac{1}{4}\right)
	\end{equation*}
\end{lemma}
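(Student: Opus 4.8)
The plan is to mirror the proof of Lemma~\ref{lem:e2}: for a fixed index $i\in T^c$, write $P_TXA^*Ae_i$ as a normalised sum of independent, mean-zero random vectors, estimate the uniform norm bound and the variance proxy, apply the vector Bernstein inequality, and finish with a union bound over the at most $n$ indices in $T^c$.

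Concretely, since $A^*A=\frac1m\sum_{k=1}^m a_ka_k^*$, I would set $g_k:=P_TXa_ka_k^*e_i$, so that $P_TXA^*Ae_i=\frac1m\sum_{k=1}^m g_k$. The mean vanishes because $\EE[g_k]=P_TX\,\EE[a_ka_k^*]\,e_i=P_TXX^{-1}e_i=P_Te_i=0$, using $i\in T^c$. For the almost-sure bound I would factor $\|g_k\|_2=|\langle a_k,e_i\rangle|\,\|P_TXa_k\|_2$ and invoke the fundamental estimates ($\|P_TXa_k\|_2^2\le\mu s$) together with incoherence ($|\langle a_k,e_i\rangle|^2\le\mu$) to get $\|g_k\|_2\le\mu\sqrt s=:B$. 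For the variance proxy, the same factorisation gives
\[
\EE[\|g_k\|_2^2]=\EE\big[\|P_TXa_k\|_2^2\,|\langle a_k,e_i\rangle|^2\big]\le\mu s\,\EE[|\langle a_k,e_i\rangle|^2]=\mu s\,\langle e_i,\EE[a_ka_k^*]e_i\rangle .
\]
Here the one point requiring a little care is that $e_i$ is $1$-sparse, so $\langle e_i,\EE[aa^*]e_i\rangle\le\lambda_{\max}(1,\EE[aa^*])\le\lambda_{\max}(s,\EE[aa^*])=\kappa_s$, the last equality being the rescaling convention established in Lemma~\ref{lem:e0}. Hence $\sum_{k=1}^m\EE[\|g_k\|_2^2]\le m\mu s\kappa_s=:\sigma^2$.

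Before applying the vector Bernstein inequality I would check its admissibility range: with $t=m\tau$ one needs $t\le\sigma^2/B=m\sqrt s\,\kappa_s$, which holds because $s\ge1$, $\kappa_s\ge1$ and $\tau\le1$, so that $m\tau\le m\le m\sqrt s\,\kappa_s$. The inequality then yields, for this fixed $i$, $\operatorname{Pr}(\|\sum_k g_k\|_2\ge m\tau)\le\exp\big(-\tfrac{(m\tau)^2}{8\sigma^2}+\tfrac14\big)=\exp\big(-\tfrac{m\tau^2}{8s\mu\kappa_s}+\tfrac14\big)$. A union bound over the at most $n$ indices $i\in T^c$ then produces the stated estimate. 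I do not anticipate any genuine obstacle here; given the machinery already in place, the only bookkeeping worth double-checking is the bound $\langle e_i,\EE[aa^*]e_i\rangle\le\kappa_s$ via the $1$-sparse eigenvalue and the verification that $t=m\tau$ lies in the admissible interval $[0,\sigma^2/B]$ of the vector Bernstein inequality.
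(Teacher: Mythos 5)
Your proof is correct and follows essentially the same route as the paper's: the same decomposition $g_k=\langle a_k,e_i\rangle P_TXa_k$, the same vector Bernstein inequality, and the same union bound over $T^c$. The only (harmless) differences are bookkeeping: for the variance proxy you bound $\|P_TXa_k\|_2^2\le\mu s$ almost surely and take expectations of $|\langle a_k,e_i\rangle|^2$ (controlled via the $1$-sparse eigenvalue), whereas the paper bounds $|\langle a_k,e_i\rangle|^2\le\mu$ almost surely and takes expectations of $\|P_TXa_k\|_2^2$ (controlled via $\|X_T\|_\infty$) --- both yield $\sigma^2=ms\mu\kappa_s$; and your almost-sure bound $B=\mu\sqrt s$ is slightly tighter than the paper's $B=s\mu$, which only affects the (trivially satisfied) admissibility check $t\le\sigma^2/B$.
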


\begin{proof}
	Fix $i\in T^c$ and decompose:
	\begin{equation*}
		P_TXA^*A e_i = \frac{1}{m}\sum_{k=1}^m g_k,
	\end{equation*}
	where $g_k := \langle a_k, e_i \rangle P_TXa_k$. It holds that
	$\EE[g_k]=0$. Next, bound
	\begin{equation*}
		\|g_k\|_2 = |\langle a_k,e_i \rangle| \|P_TXa_k\|_2
		\leq s\mu=: B.
	\end{equation*}
	Furthermore:
	\begin{equation*}
		\EE[\|g_k\|_2^2]	
		\leq \sum_{i\in T} \mu\EE[\langle e_i,Xa_ka_k^*Xe_i\rangle]	
		\leq \sum_{i\in T} \mu\|X_T\|_{\infty}\leq s\mu\kappa_s.	
	\end{equation*}	
	We can therefore set $\sigma^2:= ms\mu \kappa_s$ and apply
	the Vector Bernstein inequality for $t=m\tau$.  Noting that
	$\sigma^2/B=m\kappa_s\geq m$ finishes the proof.
\end{proof} 

\subsection{Convex geometry}\label{sub:Convex_Geometry}

Our aim is to prove that the solution $x^{\star}$ to the optimization
problem (\ref{eqn:l1}) equals the unknown vector $x$.
One way of assuring this is by exhibiting a \emph{dual
certificate} \cite{bertsekas_convex_2003}.
This
method was first introduced in \cite{candes_near-optimal_2006} and is
now standard. We will use a relaxed version of this first introduced in
\cite{gross_recovering_2011} and later adapted from matrices to
vectors in \cite{candes_probabilistic_2011}. Our version further
adapts the statement to the anisotropic setting.

\begin{lemma}[Inexact duality]\label{lem:inexact}
	Let $x\in\CC^n$ be a $s$-sparse vector, let
	$T=\mathrm{supp}\left(x\right)$.

	Assume that
	\begin{eqnarray}
		\|\left(P_T XA^*AP_T\right)^{-1}\|_{\infty} 
		&\leq& 2, \label{eq:ass1}\\
		\mathrm{max}_{i\in T^c}\|P_T XA^*Ae_i\|_2 
		&\leq& 1 \label{eq:ass2}
	\end{eqnarray}
	and that there is a vector $v$ in the row space of $A$ obeying
	\begin{eqnarray}
		\|v_T-\mathrm{sgn}\left(x\right)\|_2	&\leq& \frac{1}{4} 
		\label{eq:dual1}\\
		\|v_{T^c}\|_{\infty}			&\leq& \frac{1}{4}.
		\label{eq:dual2} 
	\end{eqnarray}

	Then the solution $x^\star$ of the convex program (\ref{eqn:l1}) is
	unique and equal to $x$.
\end{lemma}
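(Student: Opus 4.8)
The plan is to show that any feasible perturbation $x^\star = x + h$ with $h \neq 0$ and $Ah = 0$ has strictly larger $\ell_1$-norm than $x$, which forces uniqueness and $x^\star = x$. I would start from the standard convexity inequality for the $\ell_1$-norm: writing $h = P_T h + P_{T^c} h$, one has
\begin{equation*}
	\|x + h\|_1 \geq \|x\|_1 + \langle \mathrm{sgn}(x), P_T h\rangle + \|P_{T^c} h\|_1,
\end{equation*}
since $\mathrm{sgn}(x)$ is a valid subgradient on $T$ and the extreme choice of subgradient on $T^c$ gives the full norm $\|P_{T^c} h\|_1$. The goal is then to prove the bracketed correction term is positive for $h \neq 0$. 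The dual vector $v$ enters here: because $v$ lies in the row space of $A$ and $Ah = 0$, we have $\langle v, h\rangle = 0$, i.e. $\langle v_T, P_T h\rangle = -\langle v_{T^c}, P_{T^c} h\rangle$. Using this together with \eqref{eq:dual1} and \eqref{eq:dual2},
\begin{equation*}
	\langle \mathrm{sgn}(x), P_T h\rangle \geq \langle v_T, P_T h\rangle - \tfrac14 \|P_T h\|_2 = -\langle v_{T^c}, P_{T^c} h\rangle - \tfrac14 \|P_T h\|_2 \geq -\tfrac14 \|P_{T^c} h\|_1 - \tfrac14 \|P_T h\|_2,
\end{equation*}
so that $\|x+h\|_1 \geq \|x\|_1 + \tfrac34 \|P_{T^c} h\|_1 - \tfrac14 \|P_T h\|_2$. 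It therefore suffices to control the on-support part: if we can show $\|P_T h\|_2 \leq c\,\|P_{T^c} h\|_1$ for a constant $c < 3$ (say $c$ close to $1$), we are done, and equality can hold only when $h = 0$.

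The remaining task is precisely this comparison of $\|P_T h\|_2$ against $\|P_{T^c} h\|_1$, and this is where assumptions \eqref{eq:ass1} and \eqref{eq:ass2} — the local-isometry and off-support-incoherence conditions on the expected sensing geometry — come in. The idea is to exploit $Ah = 0$ by applying the operator $P_T X A^* A$ to it: $0 = P_T X A^* A h = P_T X A^* A P_T h + P_T X A^* A P_{T^c} h$. Since $P_T X A^* A P_T$ is invertible on $\mathrm{range}(P_T)$ with inverse of norm at most $2$ by \eqref{eq:ass1}, this gives
\begin{equation*}
	\|P_T h\|_2 \leq \|(P_T X A^* A P_T)^{-1}\|_\infty \, \|P_T X A^* A P_{T^c} h\|_2 \leq 2 \, \big\| \textstyle\sum_{i \in T^c} (P_T X A^* A e_i)\, h_i \big\|_2 \leq 2 \sum_{i \in T^c} \|P_T X A^* A e_i\|_2 \, |h_i| \leq 2 \|P_{T^c} h\|_1,
\end{equation*}
where the last step uses \eqref{eq:ass2}. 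Plugging $c = 2$ into the bound above yields $\|x + h\|_1 \geq \|x\|_1 + \tfrac34\|P_{T^c}h\|_1 - \tfrac12\|P_{T^c}h\|_1 = \|x\|_1 + \tfrac14\|P_{T^c}h\|_1 \geq \|x\|_1$, with equality only if $P_{T^c} h = 0$, which via the chain above forces $P_T h = 0$ too, hence $h = 0$.

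The one subtlety I expect to need care — the main obstacle, such as it is — is the step $\langle \mathrm{sgn}(x), P_T h\rangle \geq \langle v_T, P_T h\rangle - \tfrac14\|P_T h\|_2$, which is just Cauchy–Schwarz applied to $\langle \mathrm{sgn}(x) - v_T, P_T h\rangle$ together with \eqref{eq:dual1}, and the analogous bound $|\langle v_{T^c}, P_{T^c}h\rangle| \leq \|v_{T^c}\|_\infty \|P_{T^c}h\|_1$ using \eqref{eq:dual2}. Everything is genuinely elementary once the certificate $v$ and the two geometric assumptions are in hand; the real content of the theorem is that such objects exist with high probability, which is established separately via Lemmas \ref{lem:e1}--\ref{lem:e4} and the golfing construction. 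For the complex case one should take the real part of $\langle v, h\rangle = 0$ throughout, and replace $\mathrm{sgn}(x)$ by the phase vector $x_i/|x_i|$ on $T$; the inequality $\|x+h\|_1 \geq \|x\|_1 + \mathrm{Re}\langle \mathrm{sgn}(x), P_T h\rangle + \|P_{T^c}h\|_1$ still holds.
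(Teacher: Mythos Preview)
Your proposal is correct and follows essentially the same route as the paper's own proof: the subgradient inequality on $T$, the orthogonality $\langle v,h\rangle=0$ combined with Cauchy--Schwarz and H\"older to control $\langle \mathrm{sgn}(x),P_Th\rangle$ via \eqref{eq:dual1}--\eqref{eq:dual2}, and then the bound $\|P_Th\|_2\le 2\|P_{T^c}h\|_1$ obtained by applying $(P_TXA^*AP_T)^{-1}P_TXA^*A$ to $Ah=0$ and invoking \eqref{eq:ass1}--\eqref{eq:ass2}. The only cosmetic difference is that the paper bounds $|\langle\mathrm{sgn}(x),h_T\rangle|$ in absolute value whereas you keep the signed inner product; the resulting inequality $\|x+h\|_1\ge\|x\|_1+\tfrac14\|P_{T^c}h\|_1$ and the conclusion $h=0$ are identical.
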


\begin{proof}
	Let $\hat{x}=x+h$ be a solution of the minimization procedure.
	We note that feasibility requires $Ah=0$.
	To prove the claim it suffices to show $h=0$. Observe:
	\begin{eqnarray*}
		\|\hat{x}\|_1
		&=&	\|x+h_T\|_1+\|h_{T^c}\|_1 \\
		&=&	\langle \mathrm{sgn}\left(x+h_T\right),x+h_T \rangle 		+	\|h_{T^c}\|_1\\
		&\geq&	\langle \mathrm{sgn}\left(x\right),x \rangle
		+	\langle \mathrm{sgn}\left(x \right),h_T \rangle
		+	\|h_{T^c}\|_1 \\
		&\geq&	\|x\|_1
		-	|\langle \mathrm{sgn}\left(x\right),h_T\rangle|
		+	\|h_{T^c}\|_1.
	\end{eqnarray*}
	Feasibility requires $\langle v,h \rangle = 0$ (since $v$ is in the row space of $A$)
	and therefore:
	\begin{eqnarray*}
		|\langle \mathrm{sgn}\left(x\right),h_T \rangle|
		&=&	|\langle \mathrm{sgn}\left(x\right)-v_T,h_T \rangle
		+	 \langle v_T,h_T \rangle| \\
		&=&	|\langle \mathrm{sgn}\left(x\right)-v_T,h_T\rangle
		-	\langle v_{T^c},h_{T^c} \rangle| \\
		&\leq&	|\langle \mathrm{sgn}\left(x\right)-v_T,h_T\rangle|
		+	|\langle v_{T^c},h_{T^c} \rangle| \\
		&\leq&	\| \mathrm{sgn}\left( x \right)-v_T \|_2
			\|h_T\|_2
			+ |\langle v_{T^c},h_{T^c} \rangle | \\
		&\leq&	\frac{1}{4}\|h_T\|_2
			+ |\langle v_{T^c},h_{T^c} \rangle |, 
	\end{eqnarray*}
	where we have used (\ref{eq:dual1}). Together with:
	\begin{equation*}
		|\langle v_{T^c},h_{T^c} \rangle|
		\leq	\|v_{T^c}\|_{\infty} \|h_{T^c}\|_1 
		\leq	\frac{1}{4}\|h_{T^c}\|_1,
	\end{equation*}
	this implies:
	\begin{equation*}
		|\langle \mathrm{sgn} \left(x\right),h_T \rangle|
		\leq	\frac{1}{4}\left(\|h_T\|_2 + \|h_{T^c}\|_1 \right).
	\end{equation*}	
	Furthermore due to (\ref{eq:ass1}) and (\ref{eq:ass2}) it holds that		
	\begin{eqnarray*}
		\|h_T\|_2
		&=&	\|\left( P_TXA^*AP_T\right)^{-1} \left(P_TXA^*AP_T\right) h_T\|_2 \\
		&=&	\|\left( P_TXA^*AP_T\right)^{-1} \left(P_TXA^*A\right)
			\left(h-h_{T^c}\right) \|_2\\
		&=&	\|-\left( P_TXA^*AP_T\right)^{-1} \left(P_TXA^*A\right) h_{T^c}\|_2 \\
		&\leq&	2\|P_TXA^*AP_{T^c}h\|_2 \\ 
		&\leq&	2\mathrm{max}_{i\in T^c}
			\|P_TXA^*Ae_i\|_2 \|h_{T^c}\|_1 \\
		&\leq&	2\|h_{T^c}\|_1, 
	\end{eqnarray*}
	All this together implies:
	\begin{eqnarray*}
		\|\hat{x}\|_1
		&\geq&	\|x\|_1 -\frac{1}{4}\|h_T\|_2 + \frac{3}{4}\|h_{T^c}\|_1 \\
		&\geq&	\|x\|_1 + \frac{1}{4}\|h_{T^c}\|_1.
	\end{eqnarray*}
	Consequently $\|\hat{x}\|_1 = \|x\|_1$ demands $\|h_{T^c}\|_1 = 0$,
	which in turn implies $\|h_T\|_2 = 0$, because
	$\|h_T\|_2\leq 2\|h_{T^c}\|_1$.
	Therefore $h=0$ which corresponds to a unique minimizer ($\hat{x}=x$).
	\end{proof}

\subsection{Construction of the certificate} \label{sub:Construction}

It remains to show that a dual certificate $v$ as described in
Lemma~\ref{lem:inexact} can indeed be constructed. We will
prove:

\begin{lemma}
	Let $x\in\CC^n$ be an $s$-sparse vector, let $\omega\geq1$. If the
	number of measurements fulfills
	\begin{equation*}
		m \geq 18044 \kappa_s \mu \omega^2 s \log n,
	\end{equation*}
	then with probability at least 
	$1-\mathrm{e}^{-\omega}$, the constraints (\ref{eq:ass1},
	\ref{eq:ass2}) will hold and a vector $v$ with the properties
	required for Lemma~\ref{lem:inexact} exists.
\end{lemma}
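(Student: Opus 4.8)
The plan is to build the dual certificate $v$ by the golfing scheme of \cite{gross_recovering_2011,candes_probabilistic_2011}, adapted to the operator $A^*AX$, whose expectation is the identity because $\EE[A^*A]=\EE[aa^*]=\Sigma^2=X^{-1}$. First I would split the $m$ measurement vectors into $L$ disjoint batches and write $A_\ell$ for the (renormalized) sampling matrix built from the $\ell$-th batch, so that the $A_\ell^*A_\ell$ are mutually independent and $\EE[A_\ell^*A_\ell X]=\Id$. With $q_0:=\mathrm{sgn}(x)$, supported on $T$ and with $\|q_0\|_2\le\sqrt s$, define inductively
\[
	v_\ell := A_\ell^*A_\ell X q_{\ell-1},\qquad
	q_\ell := q_{\ell-1}-P_T v_\ell = P_T(\Id - A_\ell^*A_\ell X)q_{\ell-1},
\]
the last equality using that $q_{\ell-1}$ is supported on $T$, and set $v:=\sum_{\ell=1}^L v_\ell$. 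Two facts are then deterministic: each $v_\ell=A_\ell^*(A_\ell X q_{\ell-1})$ lies in $\Span\{a_k\}$, hence so does $v$, which is exactly the row-space requirement of Lemma~\ref{lem:inexact}; and the sum telescopes, $v_T=\sum_\ell(q_{\ell-1}-q_\ell)=q_0-q_L$, so $\|v_T-\mathrm{sgn}(x)\|_2=\|q_L\|_2$.

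The randomness is then controlled batchwise. For the decay of the residual I would apply the Low-distortion Lemma~\ref{lem:e2} to each batch: conditioning on batches $1,\dots,\ell-1$ (which determine $q_{\ell-1}$ but are independent of batch $\ell$), Lemma~\ref{lem:e2} with $z=q_{\ell-1}$ gives $\|q_\ell\|_2\le\tfrac12\|q_{\ell-1}\|_2$ off a controlled event once $m_\ell$ exceeds a constant multiple of $s\mu\kappa_s$ times a logarithmic budget. Iterating yields $\|q_L\|_2\le 2^{-L}\sqrt s$, so $L\asymp\log s$ rounds suffice to reach $\|q_L\|_2\le\tfrac14$, which is (\ref{eq:dual1}). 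For (\ref{eq:dual2}) I bound $\|v_{T^c}\|_\infty\le\sum_{\ell=1}^L\|P_{T^c}A_\ell^*A_\ell X q_{\ell-1}\|_\infty$ and apply the Off-support incoherence Lemma~\ref{lem:e3} to each summand (again after conditioning), getting $\|P_{T^c}A_\ell^*A_\ell X q_{\ell-1}\|_\infty\le\tau'_\ell\|q_{\ell-1}\|_2\le\tau'_\ell 2^{-(\ell-1)}\sqrt s$; choosing the $\tau'_\ell$ — of order $1/\sqrt s$ in the early rounds, where the residual is still large — so that $\sum_\ell\tau'_\ell 2^{-(\ell-1)}\sqrt s\le\tfrac14$ closes that estimate.

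The structural hypotheses (\ref{eq:ass1}) and (\ref{eq:ass2}) I would obtain from the full sampling matrix: Lemma~\ref{lem:e1} with $\tau=\tfrac12$ makes $P_TXA^*AP_T=P_T+E$ with $\|E\|_\infty\le\tfrac12$ on $\mathrm{range}(P_T)$, hence invertible there with $\|(P_TXA^*AP_T)^{-1}\|_\infty\le 2$ by a Neumann series; and Lemma~\ref{lem:e4} with $\tau=1$ gives $\max_{i\in T^c}\|P_TXA^*Ae_i\|_2\le1$. With every piece in place, the proof ends with a union bound over the $\mathcal{O}(L)$ golfing events and the two structural events, each of which — by Lemmas~\ref{lem:e1}–\ref{lem:e4} — fails with probability exponentially small in $m_\ell/(s\mu\kappa_s)$ up to a polynomial-in-$n$ prefactor; demanding that the total be $\le e^{-\omega}$ fixes how large the $m_\ell$, and hence $m=\sum_\ell m_\ell$, must be.

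Everything above is conceptually routine (the one point deserving care is the conditioning, which makes $q_{\ell-1}$ deterministic when a deviation lemma is applied to batch $\ell$); the real obstacle is the final optimization. One must simultaneously take $L$ large enough that $\|q_L\|_2\le\tfrac14$, keep each $m_\ell$ large enough that the bounds of Lemmas~\ref{lem:e1}–\ref{lem:e4} survive a union bound over the $\sim n$ coordinates of $T^c$ and the $\sim L$ rounds, and still keep $m=\sum_\ell m_\ell$ below $C\kappa_s\mu\omega^2 s\log n$. Achieving a clean dependence requires a slightly non-uniform allocation of the sampling budget across rounds — the first few rounds, carrying the large residual $\|q_{\ell-1}\|_2$ and therefore needing the finest off-support control, being the expensive ones — together with the elementary fact $\kappa_s\mu\ge1$ from Lemma~\ref{lem:e0} and the standing assumption $m\ge s$. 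The explicit constant $18044$, and its improvement to $228$ for $n,s$ large, is then merely the outcome of propagating all these loose numerical factors, and I would not try to optimize it.
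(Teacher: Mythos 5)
Your overall architecture is the right one and matches the paper's: split the samples into batches, run the golfing recursion $q_\ell = P_T(\Id - A_\ell^*A_\ell X)q_{\ell-1}$ with $v = \sum_\ell A_\ell^*A_\ell X q_{\ell-1}$, use the telescoping identity to get $\|v_T-\mathrm{sgn}(x)\|_2 = \|q_L\|_2$, control each step by conditioning and invoking Lemmas~\ref{lem:e2} and~\ref{lem:e3}, and obtain the structural hypotheses (\ref{eq:ass1})--(\ref{eq:ass2}) from Lemmas~\ref{lem:e1} and~\ref{lem:e4}. You also correctly identify the non-uniform batch allocation (first two rounds expensive, later rounds cheap) and the role of $\kappa_s\mu\geq1$. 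All of that is faithful.

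What you are missing is the \emph{resampling} (or ``restart'') device, and it is not a numerical nicety --- it is a load-bearing idea. Your plan concludes with ``a union bound over the $\mathcal{O}(L)$ golfing events.'' Because $L \asymp \log s$, a union bound forces each later round to fail with probability $\lesssim e^{-\omega}/L$, which by Lemma~\ref{lem:e2} with $c_i = \tfrac12$ costs $m_i \gtrsim s\mu\kappa_s(\omega + \log L) = s\mu\kappa_s(\omega + \log\log s)$ per round. Summed over $\sim\log s$ rounds this contributes a term $\sim s\mu\kappa_s\log s\,\log\log s$, which for $\omega$ of order one is \emph{not} dominated by the target $C\kappa_s\mu\omega^2 s\log n$; you would end up proving the lemma only up to an extra $\log\log$ factor, or only for $s$ below a (huge but finite) threshold. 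The paper avoids this by \emph{not} demanding that a fixed sequence of $L$ batches all succeed. Instead it draws a fresh batch of $m_i$ vectors, checks whether the two golfing conditions (\ref{eq:golfing1})--(\ref{eq:golfing2}) hold, and if not simply redraws; the per-batch failure probability only needs to be bounded by a constant ($\leq 1/12$). The total number of attempts is then controlled by a binomial concentration bound (McDiarmid's inequality applied to $\operatorname{Bin}(l',\tfrac{11}{12})$), giving $\Pr(\sum_i r_i \geq l') \leq \tfrac16 e^{-\omega}$ with $l' = 4(\omega + \log 12 + \tfrac23 l)$. This is what lets the later-round batches have size $m_i = 694\kappa_s\mu\omega s$ with no $\log L$ term, and hence yields the clean $m \geq 18044\,\kappa_s\mu\omega^2 s\log n$ for all $s$. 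Without that trick your proof, however carefully optimized, does not reach the stated bound.

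A smaller remark: you present the row-space membership of $v$ and the telescoping of $v_T$ as ``deterministic'' facts, which is fine, but with resampling you should phrase it as: among the (at most $l'$) batches drawn, $l$ of them are accepted, and $v$ is built from those; since all accepted batches are drawn from the $m$ available rows of $A$, $v$ still lies in the row space of $A$ provided the total budget $\sum_i m_i r_i \leq m$, which is exactly what the binomial bound guarantees.
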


	This lemma immediately implies the Main Theorem. \newline
	The proof employs a recursive procedure (dubbed the ``golfing
	scheme'') to construct a sequence $v_i$ of vectors converging to a
	dual certificate with high probability. The technique has been
	developed in \cite{gross_quantum_2010,gross_recovering_2011} in the
	context of low-rank matrix recovery problems and has later been refined for
	compressed sensing in \cite{candes_probabilistic_2011}. Here, we further
	modify the construction to handle anisotropic ensembles.

	\begin{proof}
	The recursive scheme consists of $l$ iterations. The $i$-th iteration
		depends on three parameters: $m_i\in\NN; c_i, t_i\in\RR$
		which will be chosen in the course of the later analysis.
		To initialize, set
		\begin{equation*}
			v_0 = 0
		\end{equation*}
		(the $v_i$ for $1\leq i \leq l$ will be defined iteratively below).
		We will use the notation
		\begin{equation*}
			q_i = \mathrm{sgn}\left(x\right) - P_Tv_i.
		\end{equation*}

		The $i$-th step of the scheme proceeds according to the following
		protocol: We sample $m_i$ vectors from the ensemble $F$. Let
		$\tilde A$ be the $m_i \times n$-matrix whose rows consists of
		these  vectors. We check whether the following two conditions
		are met:
		\begin{eqnarray}
			\|P_T\left(\Id-\frac{m}{m_i}\tilde A^* \tilde A X\right)P_Tq_{i-1}\|_2
			&\leq&	c_i\|q_{i-1}\|_2, \label{eq:golfing1}\\
			\|\frac{m}{m_i}P_{T^c}\tilde A^* \tilde A XP_Tq_{i-1}\|_\infty
			&\leq&	t_i\|q_{i-1}\|_2. \label{eq:golfing2}
		\end{eqnarray}	
		If so, set 
		\begin{equation*}	
			A_i = \tilde A,	\qquad 
			v_i =	\frac{m}{m_i}A^*_iA_iXP_T\left(
			\mathrm{sgn}\left(x\right)-v_{i-1}\right) + v_{i-1}
		\end{equation*}
		and proceed to step $i+1$.
		If either of (\ref{eq:golfing1}), (\ref{eq:golfing2}) fails to
		hold, repeat the $i$-th step with a fresh batch of $m_i$ vectors
		drawn from $F$. Denote the number of repetitions of the $i$-th step
		by $r_i$.

		We now analyze the properties of the above recursive
		construction.
		The following identities are easily verified by repeating the given transformations inductively:
		\begin{eqnarray}
			v&:=&v_l
			= \frac{m}{m_l}A_l^*A_l X P_T \left( \mathrm{sgn}(x)-v_{l-1}\right) + v_{l-1} 	\nonumber \\
			&=& \frac{m}{m_l}A_l^*A_l X P_T q_{l-1} + v_{l-1}	\nonumber \\
			&=& \ldots = \sum_{i=1}^l \frac{m}{m_i}A^*_iA_iXP_Tq_{i-1}, \label{eq:v}\\
			q_i &=&	\mathrm{sgn}(x) - P_T v_i 	\nonumber \\ 
			&=& \mathrm{sgn}(x) - P_T \left( \frac{m}{m_i}A_i^* A_i X P_T \left( \mathrm{sgn}(x) - v_{i-1} \right) + v_{i+1} \right) \nonumber \\
			&=& \left(\mathrm{sgn}(x) - P_T v_{i-1}\right) - \frac{m}{m_i} A_i^*A_i X P_T \left( \mathrm{sgn}(x) - v_{i-1} \right) \nonumber \\
			&=& P_T \left( \Id - \frac{m}{m_i}A_i^* A_i X \right) q_{i-1}  \nonumber \\
			&=&  \ldots = \prod_{j=1}^iP_T\left(\Id - \frac{m}{m_i}A^*_jA_jX\right) P_T\mathrm{sgn}\left(x\right). \label{eq:q}
		\end{eqnarray}
		Together with
		(\ref{eq:golfing1}) and (\ref{eq:golfing2}), one obtains
		\begin{eqnarray*}
			\|q_l\|_2
			&\leq&	c_l\|q_{l-1}\|_2
			\leq	\prod_{i=1}^lc_i\|q_0\|_2
			=	\prod_{i-1}^lc_i\|\mathrm{sgn}\left(x\right)\|_2 
			=	\sqrt{s}\prod_{i=1}^lc_i, \\
			\|v_{T^c}\|_{\infty} 
			&=&	\left\|P_{T^c}\left(\sum_{i=1}^l
				\frac{m}{m_i}A^*_iA_iXP_Tq_{i-1}\right) \right\|_{\infty}  \\
			&\leq&	\sum_{i=1}^l \left\|\frac{m}{m_i}P_{T^c}A^*_iA_iXP_Tq_{i-1}\right\|_2\\
			&\leq&	\sum_{i=1}^lt_i\|q_{i-1}\|_2
			\leq	\sqrt{s}\left(t_1+\sum_{i=2}^lt_i\prod_{j=1}^{i-1}c_j\right).
		\end{eqnarray*}
		Following \cite{gross_recovering_2011}, we choose the parameters
		$l, c_i, t_i$ as
		\begin{equation*}
			l = \left\lceil \frac{1}{2} \log_2 s \right\rceil + 2,	\qquad
			c_1 = c_2 =
				\frac{1}{2\sqrt{\log n}},  \qquad
			t_1 = t_2 = 
				\frac{1}{8\sqrt{s}},
		\end{equation*}
		and for $i\geq 3$
		\begin{eqnarray*}
				t_i = \frac{\log n}{8\sqrt{s}}, \qquad
				c_i = \frac{1}{2}.
		\end{eqnarray*}
		A short calculation then yields
		\begin{equation*}
			\|v_{T^c}\|_\infty \leq \frac14,
			\qquad
			\|v - \mathrm{sgn}(x_T)\|_2 = \|q_l\|_2 \leq \frac14,
		\end{equation*}
		which are conditions (\ref{eq:dual1}) and (\ref{eq:dual2}).

		Next, we need to establish that the total number
		\begin{equation*}
			\sum_{i=1}^l m_i r_i
		\end{equation*}
		of sampled vectors remains small with high probability. More
		precisely, we will bound the probability
		\begin{equation*}
			p_3 :=
			\operatorname{Pr}
			\left(
				(r_1 > 1)
				\text{ or }
				(r_2 > 1)
				\text{ or }
				\sum_{i=1}^l r_i \geq l' 
			\right)
		\end{equation*}
		for some $l'$ to be chosen later.

		To that end, 	
		denote by $p_1(i)$ the probability that (\ref{eq:golfing1})
		fails to hold in any given batch of the $i$-th step. Analogously,
		let $p_2(i)$ be the probability of failure for
		(\ref{eq:golfing2}). Lemmas \ref{lem:e2} and \ref{lem:e3} give the estimates
		\begin{equation*}
			p_1 \left(i\right)
			\leq	\mathrm{exp} \left(-\frac{m_ic_i^2}{16s\mu \kappa_s}
			+\frac{1}{4}\right), \qquad
			p_2\left(i\right)
			\leq	2n\exp \left( -\frac{3m_i t_i^2}
				{2\mu \kappa_s\left(3+\sqrt{s}t_i\right)} \right).
		\end{equation*}
		We choose
		\begin{equation*}
			l' = 4(\omega+\log 12 + \frac23 l), \qquad 
			m_1=m_2 =
				694  \kappa_s \mu \omega s \log n,
		\end{equation*}
		and for $i\geq 3$
		\begin{equation*}
			m_i	=
			694 \kappa_s \mu \omega s.
		\end{equation*}
		Such a choice can be guaranteed by a total sampling rate
		$m \geq 18044  \kappa_s \mu \omega^2 s \log n$
		and ensures
		\begin{equation*}
			p_1(i) + p_2(i) \leq \frac16 e^{-\omega} \leq \frac{1}{12}
		\end{equation*}
		for all $i$. 
		(It is easily seen that for for $n\gg1$, a bound of
		$m\geq 228 \kappa_s \mu \omega^2 s \log n$ is sufficient. The constants
		appearing here are highly unlikely to be optimal.)
		Note that 
		\begin{equation*}
				\sum_{i=1}^l r_i \geq l' 
		\end{equation*}
		only if fewer than $l$ of the first $l'$ batches of vectors
		satisfied both (\ref{eq:golfing1}) and (\ref{eq:golfing2}). This
		implies that
		\begin{equation*}
			\operatorname{Pr}\left(
				\sum_{i=1}^l r_i \geq l' 
			\right)
			\leq
			\operatorname{Pr}(N\leq l-1)_{\operatorname{Bin}(l',\frac{11}{12})},
		\end{equation*}
		where the r.h.s.\ is the probability of obtaining fewer than $l$
		outcomes in a binomial process with $l'$ repetitions and individual
		success probability $11/12$. We bound this quantity using a
		standard concentration bound from
		\cite{mcdiarmid_colin_concentration_????} (C. McDiarmid's section "Concentration"):
		\begin{equation*}
			\operatorname{Pr}\left(|\mathrm{Bin}\left(n,p\right)
			-np|>\tau\right)
			\leq
			2\,\mathrm{exp}\left(-\frac{\tau^2}{3np}\right). 
		\end{equation*}
		 This yields $ \operatorname{Pr}\left(
				\sum_{i=1}^l r_i \geq l' 
			\right)\leq\frac16 \mathrm{e}^{-\omega}$ for our choice of
		$l^{'}$.
		Putting things together, we have
		\begin{equation*}
			p_3 \leq 3\,\frac16 \mathrm{e}^{-\omega} 
			= \frac12 \mathrm{e}^{-\omega}
		\end{equation*}
		according to the union bound.
		In addition, we have to take into account that properties
		(\ref{eq:ass1}) and (\ref{eq:ass2}) can fail as well.
		We denote these probabilities of failure by $p_4$ and $p_5$.
		Lemmas \ref{lem:e1} and \ref{lem:e4} give:
		\begin{equation*}
		p_4
		\leq 2s \mathrm{exp}\left(-\frac{6m}{7s\mu \kappa_s}\right), \qquad
		p_5
		\leq	n \mathrm{exp}\left(
		-\frac{m}{8s\mu \kappa_s}+\frac14 \right).
		\end{equation*}
		Our sampling rate $m$ guarantees $p_4 \leq \frac14\mathrm{e}^{-\omega}$ as well as
		$p_5 \leq \frac14\mathrm{e}^{-\omega}$.
		Applying the union bound now yields our desired
		 overall error bound 
			($p_3+p_4+p_5\leq \mathrm{e}^{-\omega}$).
\end{proof}

\section{Conclusion and Outlook}

In this paper, we have shown that proof techniques based on duality
theory and the ``golfing scheme'' are versatile enough to handle the
situation where the ensemble of measurement vectors is not isotropic.

An obvious future line of research would be to translate these results
to the low-rank matrix recovery problem. Given the high degree of
similarity between \cite{gross_recovering_2011} and
\cite{candes_probabilistic_2011}, this should be a conceptually
straight-forward task. This would further generalize the scope of this
proof method, beyond ortho-normal operator bases
\cite{gross_recovering_2011} and tight frames
\cite{ohliger_continuous-variable_2011}.

Also, Proposition~\ref{prop:incoherence2less} suggests that the second
incoherence property (\ref{eqn:incoherence2}) can be relaxed or maybe
even disposed of. We leave this as an open problem.

\section{Acknowledgments}

We thank E.~Cand\`es for suggesting the problem treated here, and him, 
Y.~Plan, P.~Jung, and P.~Walk for insightful discussions.  Financial
support from the Excellence Initiative of the German Federal and State
Governments (grant ZUK 43), the German Science Foundation (DFG grants
CH 843/1-1 and CH 843/2-1), the Swiss National Science Foundation, and
the Swiss National Center of Competence in Research ``Quantum Science and
Technology'' is gratefully acknowledged.

\bibliographystyle{elsarticle-num}
\bibliography{anisotropic1}

\begin{thebibliography}{10}
\expandafter\ifx\csname url\endcsname\relax
  \def\url#1{\texttt{#1}}\fi
\expandafter\ifx\csname urlprefix\endcsname\relax\def\urlprefix{URL }\fi
\expandafter\ifx\csname href\endcsname\relax
  \def\href#1#2{#2} \def\path#1{#1}\fi

\bibitem{candes_robust_2006}
E.~Candes, J.~Romberg, T.~Tao, Robust uncertainty principles: exact signal
  reconstruction from highly incomplete frequency information, {IEEE}
  Transactions on Information Theory 52 (2006) 489 -- 509.

\bibitem{candes_near-optimal_2006}
E.~Candes, T.~Tao, Near-optimal signal recovery from random projections:
  Universal encoding strategies?, {IEEE} Transactions on Information Theory 52
  (2006) 5406 –-- 5425.

\bibitem{donoho_compressed_2006}
D.~Donoho, Compressed sensing, {IEEE} Transactions on Information Theory 52
  (2006) 1289 -- 1306.

\bibitem{wainwright_sharp_2009}
M.~Wainwright, Sharp thresholds for high-dimensional and noisy sparsity
  recovery using $\ell_1$-constrained quadratic programming {(Lasso)}, {IEEE}
  Transactions on Information Theory 55 (2009) 2183 --2202.

\bibitem{candes_probabilistic_2011}
E.~J. Candes, Y.~Plan, A probabilistic and {RIPless} theory of compressed
  sensing, {IEEE} Transactions on Information Theory 57 (2011) 7235--7254.

\bibitem{rudelson_reconstruction_2011}
M.~Rudelson, S.~Zhou, Reconstruction from anisotropic random measurements,
  preprint: arXiv:1106.1151 (2011).

\bibitem{juditsky_verifiable_2011}
A.~Juditsky, A.~Nemirovski, On verifiable sufficient conditions for sparse
  signal recovery via $\ell_1$ minimization, Mathematical Programming 127
  (2011) 57--88.

\bibitem{gross_note_2010}
D.~Gross, V.~Nesme, Note on sampling without replacing from a finite collection
  of matrices, preprint: arXiv:1001.2738 (2010).

\bibitem{bickel_simultaneous_2009}
P.~J. Bickel, Y.~Ritov, A.~B. Tsybakov, Simultaneous analysis of {Lasso} and
  {Dantzig} selector, The Annals of Statistics 37 (2009) 1705--1732.

\bibitem{bickel_discussion_2007}
P.~J. Bickel, Discussion: The {Dantzig} selector: Statistical estimation when
  $p$ is much larger than $n$, The Annals of Statistics (2007) 2352--2357.

\bibitem{kashin_remark_2007}
B.~S. Kashin, V.~N. Temlyakov, A remark on compressed sensing, Mathematical
  notes 82 (2007) 748--755.

\bibitem{candes_exact_2009}
E.~Candes, B.~Recht, Exact matrix completion via convex optimization,
  Foundations of Computational Mathematics 9 (2009) 717--772.

\bibitem{candes_power_2010}
E.~Candes, T.~Tao, The power of convex relaxation: {Near-Optimal} matrix
  completion, {IEEE} Transactions on Information Theory 56 (2010) 2053 --2080.

\bibitem{liu_universal_2011}
Y.~Liu, Universal low-rank matrix recovery from {{Pauli}} measurements, Adv. in
  Neural Information Processing Systems 24 (2011) 1638–--1646.

\bibitem{candes_tight_2011}
E.~J. Candes, Y.~Plan, Tight oracle inequalities for low-rank matrix recovery
  from a minimal number of noisy random measurements, {IEEE} Transactions on
  Information Theory 57 (2011) 2342 --2359.

\bibitem{flammia_quantum_2012}
S.~T. Flammia, D.~Gross, Y.-K. Liu, J.~Eisert, Quantum tomography via
  compressed sensing: error bounds, sample complexity and efficient estimators,
  New Journal of Physics 14 (2012) 095022.

\bibitem{negahban_restricted_2010}
S.~Negahban, M.~J. Wainwright, Restricted strong convexity and weighted matrix
  completion: Optimal bounds with noise, The Journal of Machine Learning
  Research 13 (2012) 1665--1697.

\bibitem{gross_quantum_2010}
D.~Gross, Y.~Liu, S.~T. Flammia, S.~Becker, J.~Eisert, Quantum state tomography
  via compressed sensing, Physical Review Letters 105 (2010) 150401.

\bibitem{gross_recovering_2011}
D.~Gross, Recovering low-rank matrices from few coefficients in any basis,
  {IEEE} Transactions on Information Theory 57 (2011) 1548--1566.

\bibitem{ahlswede_strong_2002}
R.~Ahlswede, A.~Winter, Strong converse for identification via quantum
  channels, {IEEE} Transactions on Information Theory 48 (2002) 569–--579.

\bibitem{tropp_user-friendly_2011}
J.~Tropp, {User-Friendly} tail bounds for sums of random matrices, Foundations
  of Computational Mathematics (2011) 1--46.

\bibitem{kueng_richard_efficient_????}
R.~Kueng, Efficient recovery of sparse vectors using anisotropic ensembles,
  research project, {ETH} Z\"urich, June 2011.

\bibitem{ledoux_probability_1991}
M.~Ledoux, M.~Talagrand, Probability in Banach Spaces: Isoperimetry and
  Processes, Springer, Berlin, 1991.

\bibitem{bertsekas_convex_2003}
D.~P. Bertsekas, A.~Nedić, A.~E. Ozdaglar, Convex analysis and optimization,
  Athena Scientific, Belmont, MA, 2003.

\bibitem{mcdiarmid_colin_concentration_????}
M.~Habib, C.~McDiarmid, J.~Ramirez-Alfonsin, B.~Reed, Probabilistic methods for
  algorithmic discrete mathematics, Vol.~16, Springer, 1998.

\bibitem{ohliger_continuous-variable_2011}
M.~Ohliger, V.~Nesme, D.~Gross, Y.~Liu, J.~Eisert, Continuous-variable quantum
  compressed sensing, preprint arXiv:1111.0853 (2011).

\end{thebibliography}

\end{document}